\documentclass[journal]{IEEEtran}

\usepackage{graphicx}
\usepackage{amsmath,amssymb}

\usepackage[utf8]{inputenc}
\usepackage{tikz}
\usetikzlibrary{shapes.geometric, arrows.meta, positioning, calc, shadows}
\usepackage{nomencl}
\usepackage{etoolbox}
\usepackage{algorithm} 
\usepackage{algpseudocode} 
\usepackage[T1]{fontenc}
\usepackage{cite}
\usepackage{array, multirow}
\usepackage[dvipsnames]{xcolor}
\makenomenclature
\usepackage{url}
\usepackage{threeparttable}
\usepackage{hyperref}
\usepackage{comment}
\usepackage{amsthm}
\newtheorem{proposition}{Proposition}
\newtheorem{remark}{Remark}
\graphicspath{{PNG/}}
\ifCLASSOPTIONcompsoc
  \usepackage[labelfont=sf,textfont=sf]{subfig}
\else
  \usepackage{subfig}
\fi

\usepackage[
    top=0.5in,
    bottom=0.5in,
    left=0.5in,
    right=0.5in
]{geometry}

\usepackage[none]{hyphenat} 
\usepackage{microtype}      
\title{Nested-Loop Trajectory-Informed Variational Quantum Solver for Interior-Point OPF}
\author{Farshad Amani, \textit{Graduate Student Member, IEEE}, Amin Kargarian, \textit{Senior Member, IEEE}
\date{Feb 2024}
\thanks{This work was supported by the National Science Foundation under Grants ECCS-1944752 and ECCS-2312086.

The authors are with the Department of Electrical and Computer Engineering, Louisiana State University, Baton Rouge, LA 70803 USA (e-mail: famani1@lsu.edu, kargarian@lsu.edu).
}
}
\begin{document}
\maketitle
\begin{abstract}
Optimal power flow (OPF) solved by an interior-point method (IPM) requires repeatedly solving  Newton linear systems. When variational quantum linear solvers (VQLS) are used, each IPM iteration involves an additional nested inner variational optimization loop, which can significantly slow the overall quantum-assisted IPM convergence. To address this challenge, this paper proposes a dual-level trainable quantum IPM framework for OPF that leverages early solver-generated trajectories rather than relying on single-point prediction. The key observation is that early IPM iterates provide informative primal-dual, slack, and barrier-variable evolution about the path to optimality, while early VQLS parameter updates provide useful information about the later variational search. At the quantum-solver level, a trainable parameter model uses a short prefix of the VQLS parameter trajectory to project the remaining variational search toward a lower-cost region. At the OPF-solver level, a second trainable model uses early primal-dual IPM iterates to project a later central path state, which is restored to an admissible point before IPM refinement continues. Simulation studies show that the proposed approach reduces the number of variational updates by up to $95\%$ while maintaining OPF objective values close to the classical IPM reference. A 2-bus demonstration on real quantum hardware is also included to validate the implementation of the proposed workflow.
\end{abstract} 
\begin{IEEEkeywords}
Optimal power flow, quantum interior-point method, coherent variational quantum linear solver, trajectory-informed projection, trainable initialization.
\end{IEEEkeywords}
\section*{Nomenclature}
\addcontentsline{toc}{section}{Nomenclature}

\begin{IEEEdescription}[\IEEEusemathlabelsep\IEEEsetlabelwidth{$\mathbf{S}_{m,k}^{(q)}(K_{\mathrm{var}})$}]

\item[\emph{Sets, Indices, and Counters:}]
\item[$\mathcal{G}$] Generator set.
\item[$i$] Generator index.
\item[$M$] Number of OPF scenarios.
\item[$r$] \text{(C)}VQLS update index.
\item[$l$] LCU unitary-term index.
\item[$q$] Solver index, $q\in\{\mathrm{VQLS},\mathrm{CVQLS}\}$.
\item[$K_{\mathrm{var}}$] \text{(C)}VQLS prefix length.
\item[$\mathcal{K}_{\mathrm{var}}$] Candidate \text{(C)}VQLS prefix set.
\item[$K_{\mathrm{ipm}}$] IPM prefix length.
\item[$K_{\mathrm{var}}^\star$] Selected \text{(C)}VQLS prefix length.
\item[$K_{\mathrm{var}}^{\max}$] Maximum \text{(C)}VQLS prefix length.
\item[$t^\star$] Selected IPM target iteration.

\item[\emph{OPF and IPM Quantities:}]
\item[$\mathbf{x}$] OPF decision vector.
\item[$f(\mathbf{x})$] OPF objective function.
\item[$\mathbf{g}(\mathbf{x})$] Equality constraints.
\item[$\mathbf{h}(\mathbf{x})$] Inequality constraints.
\item[$P_{G_i}$] Active generation of unit $i$.
\item[$a_i,b_i,c_i$] Cost coefficients of unit $i$.
\item[$\mathbf{z}$] Primal-dual IPM vector.
\item[$\mathbf{F}(\mathbf{z})$] KKT residual function.
\item[$\Delta \mathbf{z}^{(k)}$] Newton step at iteration $k$.
\item[$\boldsymbol{\lambda}$] Equality-constraint dual vector.
\item[$\boldsymbol{\nu}$] Inequality-constraint dual vector.
\item[$\mathbf{s}$] Slack-variable vector.
\item[$\mu$] Barrier parameter.

\item[\emph{Linear-System Quantities:}]
\item[$\mathbf{A}$] Linear-system matrix.
\item[$\mathbf{b}$] Right-hand-side vector.
\item[$\mathbf{d}^{(k)}$] Newton direction at iteration $k$.
\item[$\mathbf{A}_{m,k}$] System matrix for scenario $m$, iteration $k$.
\item[$\mathbf{d}_{m,k}$] Newton direction for scenario $m$, iteration $k$.
\item[$\mathbf{b}_{m,k}$] Right-hand side for scenario $m$, iteration $k$.

\item[\emph{\text{(C)}VQLS Quantities:}]
\item[$U_b$] State-preparation unitary.
\item[$|d(\boldsymbol{\theta})\rangle$] Ansatz trial state for the search direction.
\item[$p_q$] Number of parameters for solver $q$.
\item[$L$] Number of LCU terms.
\item[$|\Psi(\boldsymbol{\theta})\rangle$] Normalized state proportional to $\mathbf{A}|d(\boldsymbol{\theta})\rangle$.
\item[$\epsilon_{\mathrm{q}}$] \text{(C)}VQLS tolerance.
\item[$d_a$] Ansatz depth.
\item[$\tau_q$] Average per-update computational cost for solver $q$.

\item[\emph{Learning-Based \text{(C)}VQLS Quantities:}]
\item[$\Theta_{m,k}^{(q)}$] Parameter trajectory for solver $q$.
\item[$\boldsymbol{\theta}_{m,k}^{(q),\mathrm{tar}}$] Best-cost target parameter vector.
\item[$\mathbf{S}_{m,k}^{(q)}(K_{\mathrm{var}})$] Variational-parameter prefix sequence.
\item[$\hat{\boldsymbol{\theta}}_{m,k}^{(q),\mathrm{tar}}$] LSTM-predicted parameter target.
\item[$\mathcal{D}_{\mathrm{var,tr}}$] \text{(C)}VQLS training set.
\item[$\mathcal{D}_{\mathrm{var,val}}$] \text{(C)}VQLS validation set.
\item[$N_{\mathrm{tr}}$] Number of training samples.
\item[$N_{\mathrm{val}}$] Number of validation samples.
\item[$\mathcal{L}_{\mathrm{var}}^{(q)}(\varphi_q)$] \text{(C)}VQLS LSTM loss.
\item[$E_{\theta}(K_{\mathrm{var}})$] Prefix prediction error.
\item[$\mathcal{J}_{\mathrm{pre}}(K_{\mathrm{var}})$] Prefix-selection score.

\item[\emph{Learning-Based IPM Central Path Quantities:}]
\item[$\boldsymbol{\Omega}_{\mathrm{ipm}}$] Weighted norm matrix for IPM prediction error.
\item[$r_{\mathrm{sv}}$] Rank parameter for dominant right-singular subspace.
\item[$\mathbf{z}_{m,k}$] IPM state for scenario $m$, iteration $k$.
\item[$\mathcal{Z}_m$] IPM trajectory for scenario $m$.
\item[$\mathbf{P}_{m}(K_{\mathrm{ipm}})$] IPM prefix sequence.
\item[$\mathbf{z}_{m}^{\mathrm{tar}}$] Target IPM point.
\item[$g_{\psi}^{\mathrm{ipm}}(\cdot)$] IPM-level LSTM predictor.
\item[$\psi$] IPM-level LSTM weights.
\item[$\hat{\mathbf{z}}_{m}^{\mathrm{tar}}$] Predicted IPM target point.
\item[$\boldsymbol{\eta}_{k}$] LSTM hidden state.
\item[$\boldsymbol{\chi}_{k}$] LSTM cell state.
\item[$\mathbf{W}_{\mathrm{ipm}}$] IPM readout weight matrix.
\item[$\boldsymbol{\beta}_{\mathrm{ipm}}$] IPM readout bias vector.
\item[$\mathcal{L}_{\mathrm{ipm}}(\psi)$] IPM-level LSTM loss.
\item[$\mathcal{R}(\cdot)$] IPM-state restoration operator.

\end{IEEEdescription}
\section{Introduction}

\IEEEPARstart{O}{ptimal} power flow (OPF) remains a central problem in power system operation because it determines operating points that satisfy network equations and engineering limits while optimizing an objective such as generation cost, loss, or voltage performance \cite{morstyn2022annealing}. Its computational importance has grown as modern grids incorporate larger shares of renewable generation, distributed energy resources, storage, and other sources of operational variability \cite{ullah2022quantum, mastroianni2023assessing}. Under these conditions, OPF is no longer only a modeling problem; it is also a computational problem, since useful solutions must be obtained within limited time and under changing system conditions \cite{pan2020deepopf, amani2025quantum}. This has motivated continued interest in algorithmic and computational methods to reduce the cost of solving large-scale OPF problems without weakening the physical structure of the formulation \cite{chen2025review, amani2025optimal}.

Among available methods, interior-point methods (IPMs) remain a standard choice for large-scale OPF because they handle nonlinear equality and inequality constraints and perform reliably in practice. However, each IPM iteration requires solving a large, sparse linear system associated with the Newton or KKT step \cite{bienstock2020mathematical}. For realistic OPF instances, these linear systems are often symmetric, indefinite, and ill-conditioned, making the linear-algebra stage a major fraction of total runtime. The repeated solution of these systems therefore remains a key computational bottleneck in OPF solvers \cite{amani2025optimal}, and addressing it requires methods that can exploit the structure of the Newton systems without abandoning the outer IPM framework \cite{amani2024quantum}.

This bottleneck has motivated growing interest in quantum methods for power system optimization \cite{rajabi2026distributed, morstyn2024opportunities}. The HHL algorithm established the theoretical possibility of quantum speedups for linear systems, but its circuit depth and fault-tolerance requirements limit near-term applicability \cite{amani2025optimal}. Variational approaches are more compatible with noisy intermediate-scale quantum hardware, replacing deep circuits with a hybrid quantum-classical loop based on parameterized circuits and classical optimization \cite{liu2024quantum, hasanzadeh2026distributed}. In particular, the variational quantum linear solver (VQLS) and its coherent variant (CVQLS) seek to prepare a quantum state proportional to the solution of a linear system by iteratively minimizing a cost function over circuit parameters \cite{kaseb2024quantum}. When \text{(C)}VQLS\footnote{\text{(C)}VQLS denotes either VQLS or CVQLS throughout, unless one solver is specifically identified.} is embedded within OPF IPM, it is invoked repeatedly across IPM iterations, and each invocation incurs an inner variational loop \cite{pareek2025limitations}. The efficiency of this loop depends strongly on initialization and the regularity of the optimization landscape encountered at each Newton step \cite{hasan2021hybrid, amani2025quantum}.

A critical but underexploited observation is that both the outer IPM loop and the inner \text{(C)}VQLS loop generate \emph{trajectories}---ordered sequences of iterates whose evolution encodes predictive structure about the direction and rate of convergence. At the OPF level, the primal-dual variables trace a central path whose early iterates already contain directional information about where the solution lies \cite{amani2025learning}. At the quantum-solver level, the \text{(C)}VQLS parameter updates trace a descent path whose early steps reflect the loss landscape geometry for that particular Newton system \cite{puig2025variational, morales2024quantum}. Crucially, because IPM-generated Newton systems along the central path share closely related numerical structure \cite{amani2025learning}, their associated \text{(C)}VQLS optimization trajectories exhibit corresponding regularity across successive IPM iterations. These properties motivate learning directly from trajectory prefixes---that is, from the actual behavior of the solver on the current problem instance---rather than from static problem features alone, which is the strategy adopted by single-shot predictors \cite{amani2026learning, ardali2026deep}.

This distinction is fundamental to the proposed approach. A single-shot predictor maps problem data directly to a predicted solution or parameter set, without observing how the solver behaves on the instance at hand. By contrast, a trajectory-informed predictor conditions its prediction on a prefix of solver iterates, capturing instance-specific directional information that static features cannot provide. The early iterations serve as a compressed, solver-generated summary of the current problem's convergence behavior, and this information substantially sharpens the prediction. At the OPF level, the predicted primal-dual point is restored to an admissible state and refined by IPM, preserving the feasibility-enforcing role of the optimizer. At the quantum-solver level, the predicted \text{(C)}VQLS parameters serve as an informed initialization, after which variational optimization continues if further refinement is needed \cite{hasan2021hybrid, amani2025quantum}. Machine learning has been explored in OPF for solution approximation and warm-starting \cite{amani2026learning, kaseb2024quantum}, but the nested, trajectory-informed structure proposed here---accelerating both loops simultaneously using solver-generated prefixes---has not been previously developed \cite{amani2025learning}.

In this paper, we develop a nested-loop trajectory-informed Q-IPM framework for OPF. At the inner quantum-solver level, an LSTM model uses an early prefix of the \text{(C)}VQLS parameter trajectory to project the variational search toward a lower-cost region, reducing the parameter-update effort for each IPM-generated Newton system. At the outer OPF-solver level, a second LSTM model uses early primal-dual IPM iterates to project a later central path state, which is then restored to an admissible primal-dual point before IPM refinement. LSTM is preferred over attention-based sequence models because its recurrent structure naturally captures the sequential dependence among solver iterates, where earlier steps inform subsequent ones, making it well-suited to short solver trajectories and limited data. The result is a framework that accelerates both the inner variational loop and the outer IPM loop while retaining the correction and feasibility-enforcing properties of the underlying solvers.

\section{Q-IPM for OPF Solution}
\label{sec:qipm_prelim}

The baseline Q-IPM structure is shown in Fig.~\ref{fig:qipm_nested_loops}. The outer IPM loop forms Newton systems, and the inner \text{(C)}VQLS loop solves them to recover the Newton directions. Thus, the computational effort depends on both outer IPM iterations and inner variational updates.
\begin{figure}[!t]
    \centering
    \includegraphics[width=0.9\columnwidth]{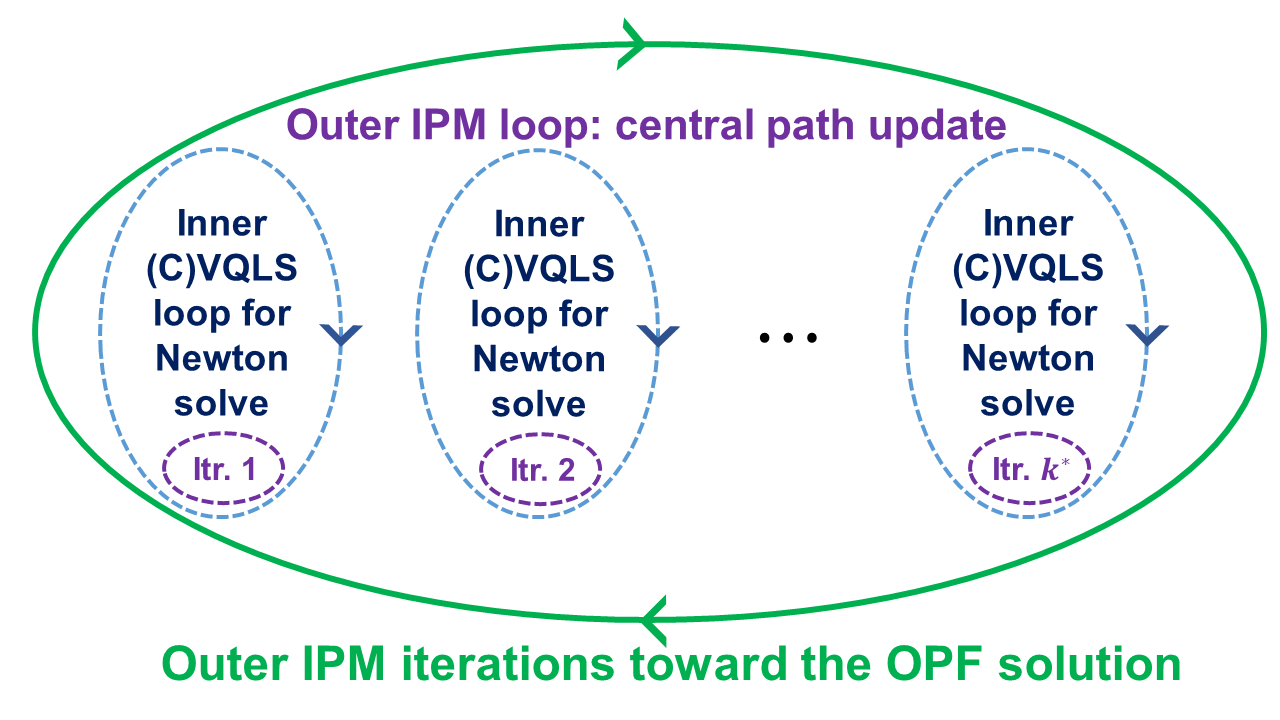}
    \caption{Nested Q-IPM structure for OPF}
    \label{fig:qipm_nested_loops}
\end{figure}
\subsection{OPF Formulation}

The OPF problem is written in compact form as \cite{zimmerman2016matpower}:
{\small
\begin{equation}
\min_{\mathbf{x}} \; f(\mathbf{x})
\label{eq:opf_obj_general}
\end{equation}}
subject to
{\small
\begin{align}
\mathbf{g}(\mathbf{x}) &= \mathbf{0},
\label{eq:opf_eq_general}\\
\mathbf{h}(\mathbf{x}) &\leq \mathbf{0},
\label{eq:opf_ineq_general}
\end{align}}
The objective is generation cost minimization:{\small
\begin{equation}
f(\mathbf{x})=
\sum_{i\in\mathcal{G}}
\left(a_i P_{G_i}^2+b_i P_{G_i}+c_i\right),
\label{eq:opf_cost}
\end{equation}}

In the AC OPF setting, $\mathbf{x}$ includes active and reactive generation, voltage magnitudes, and voltage angles. In the DC OPF setting, the formulation is simplified by neglecting reactive-power and voltage-magnitude variables, so that $\mathbf{x}$ mainly includes active-power generation and voltage angles.


\subsection{Outer-Loop IPM Newton Linear Systems}

The OPF problem is solved using an IPM. After introducing slack variables and barrier terms, the first-order optimality conditions are written in compact KKT form as
{\small
\begin{equation}
\mathbf{F}(\mathbf{z}) = \mathbf{0}.
\label{eq:kkt_compact}
\end{equation}}
At IPM iteration $k$, a Newton step is computed from
{\small
\begin{equation}
\mathbf{J}(\mathbf{z}^{(k)})\,\Delta \mathbf{z}^{(k)} = -\mathbf{F}(\mathbf{z}^{(k)}).
\label{eq:newton_kkt}
\end{equation}}
So, each IPM iteration requires the solution of a sparse linear system of the form
{\small
\begin{equation}
\mathbf{A}^{(k)} \mathbf{d}^{(k)} = \mathbf{b}^{(k)},
\label{eq:linear_system_general}
\end{equation}}
where $\mathbf{A}^{(k)}$ is the Newton matrix, $\mathbf{d}^{(k)}$ is the Newton search direction, and $\mathbf{b}^{(k)}$ is the corresponding mismatch vector.

Since \eqref{eq:linear_system_general} is solved repeatedly along the IPM central path, the total computational effort depends strongly on the cost of the linear-solve stage. In practical OPF problems, $\mathbf{A}^{(k)}$ is typically sparse and may become ill-conditioned, making repeated Newton solves a major computational bottleneck.


\subsection{Inner-Loop \text{(C)}VQLS Linear-System Solver}

The generic linear system solved by the inner quantum routine is written as \cite{morales2024quantum}
{\small
\begin{equation}
\mathbf{A}\mathbf{d}=\mathbf{b},
\label{eq:vqls_linear_system}
\end{equation}}
where $\mathbf{d}$ denotes the search direction to be recovered, and the normalized right-hand side is encoded as a quantum state $|b\rangle = U_b|0\rangle$. The goal of (C)VQLS is to prepare a quantum state proportional to the normalized Newton direction solving \eqref{eq:vqls_linear_system}.

A parameterized quantum circuit, or ansatz, is used to generate a trial state
{\small
\begin{equation}
|d(\boldsymbol{\theta})\rangle = V(\boldsymbol{\theta})|0\rangle,
\label{eq:vqls_trial_state}
\end{equation}}
where $V(\boldsymbol{\theta})$ denotes the ansatz and $\boldsymbol{\theta}\in\mathbb{R}^{p}$ is the vector of variational parameters. The choice of ansatz determines both the expressibility of the trial state and the trainability of the variational optimization.

To enable circuit-based evaluation, the system matrix is written as a linear combination of unitaries,
{\small
\begin{equation}
\mathbf{A}=\sum_{l=0}^{L-1} c_l \mathbf{A}_l,
\label{eq:lcu_form}
\end{equation}}
where $c_l \in \mathbb{R}$ and each $\mathbf{A}_l$ is unitary. In VQLS, the parameters are updated so that the ansatz state satisfies $\mathbf{A}|d(\boldsymbol{\theta})\rangle \propto |b\rangle$. A common global cost function is
{\small
\begin{equation}
C_{\mathrm{VQLS}}(\boldsymbol{\theta})
=
1-
\frac{
\left|
\langle b|\mathbf{A}|d(\boldsymbol{\theta})\rangle
\right|^2
}{
\langle d(\boldsymbol{\theta})|\mathbf{A}^\dagger \mathbf{A}|d(\boldsymbol{\theta})\rangle
},
\label{eq:vqls_cost}
\end{equation}}
which decreases as $\mathbf{A}|d(\boldsymbol{\theta})\rangle$ becomes aligned with $|b\rangle$. In practice, the quantities in \eqref{eq:vqls_cost} are estimated from quantum measurements, typically through expectation values associated with the unitary terms in \eqref{eq:lcu_form}, while a classical optimizer updates $\boldsymbol{\theta}$ iteratively to reduce the cost.

CVQLS uses the same variational state in \eqref{eq:vqls_trial_state}, but evaluates the linear-system mismatch differently. Define
{\small
\begin{equation}
|\Psi(\boldsymbol{\theta})\rangle
=
\frac{\mathbf{A}|d(\boldsymbol{\theta})\rangle}
{\sqrt{\langle d(\boldsymbol{\theta})|\mathbf{A}^\dagger \mathbf{A}|d(\boldsymbol{\theta})\rangle}},
\label{eq:cvqls_psi}
\end{equation}}
and consider the cost
{\small
\begin{equation}
C_{\mathrm{CVQLS}}(\boldsymbol{\theta})
=
1-
\left|
\langle b|\Psi(\boldsymbol{\theta})\rangle
\right|^2.
\label{eq:cvqls_cost}
\end{equation}}
The key distinction is that CVQLS attempts to implement the action of $\mathbf{A}$ coherently, typically by using ancillary qubits and controlled operations associated with the unitary components of \eqref{eq:lcu_form}. This allows the overlap in \eqref{eq:cvqls_cost} to be estimated more directly from a coherently prepared state. By contrast, standard VQLS usually evaluates the cost through a decomposition into separate expectation values linked to the terms of $\mathbf{A}$ rather than through direct coherent preparation of $\mathbf{A}|d(\boldsymbol{\theta})\rangle$ \cite{CoherentVar}.

Fig.~\ref{fig:qipm_nested_loops} highlights the nested computation in Q-IPM: the outer IPM loop advances the OPF primal-dual state, while each outer iteration invokes an inner \text{(C)}VQLS solve to recover the Newton direction. 
\section{Trainable \text{(C)}VQLS Parameter Initialization}
\label{sec:learning_vqls}

When \text{(C)}VQLS is embedded inside Q-IPM, the overall convergence can be slowed by two nested iterative processes: the outer IPM iterations and the inner variational updates required to solve each Newton system. The proposed framework addresses this nested cost by using sequence-learning models that exploit early solver-generated trajectories. These early iterations provide a compact behavioral signature of the current optimization path, capturing the dominant descent direction and local response of the solver. In this section, we focus on the inner \text{(C)}VQLS level, where the early parameter-update prefix is used to predict a rich initialization for the remaining variational search. The outer IPM-level projection is described in Section~\ref{sec:learning_qipm}.

\begin{figure}[!t]
    \centering
    \includegraphics[width=0.9\columnwidth]{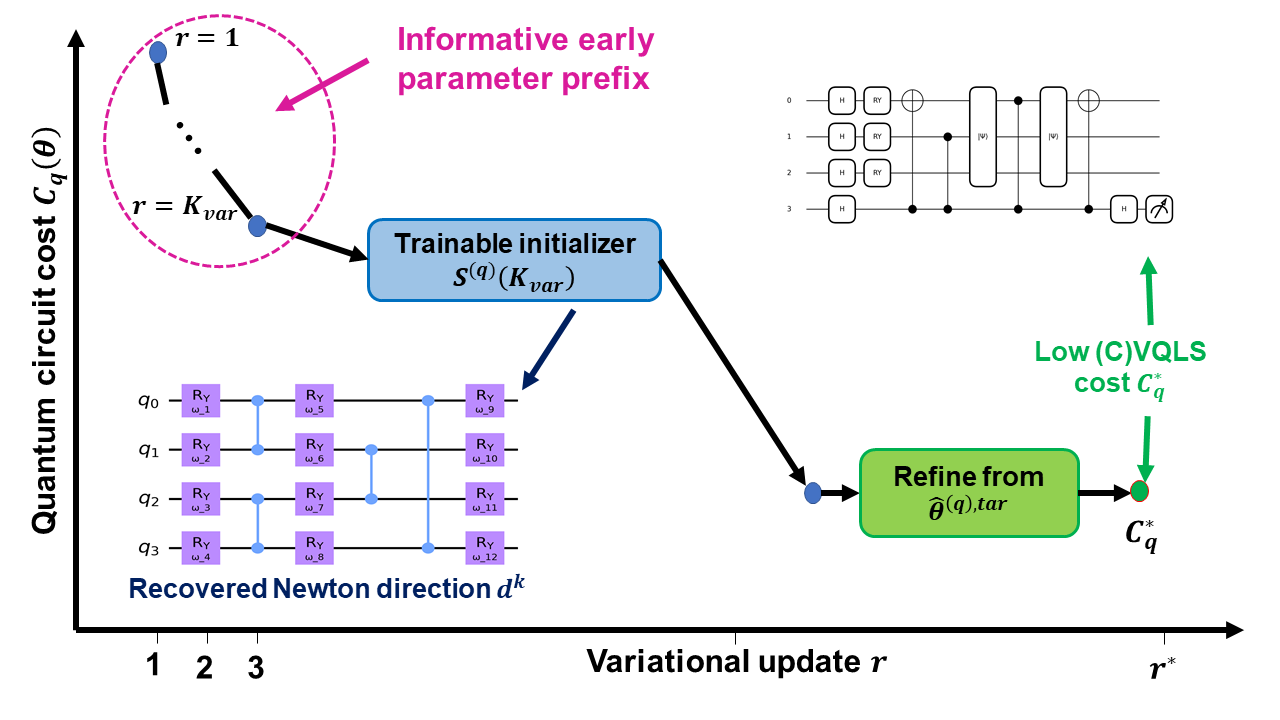}
    \caption{Trainable \text{(C)}VQLS initialization from an early parameter prefix.}
    \label{fig:cvqls_trainable_initialization}
\end{figure}

Fig.~\ref{fig:cvqls_trainable_initialization} illustrates the concept: after $K_{\mathrm{var}}$ variational updates, the parameter prefix feeds the trainable model, which predicts a continuation point; the solver then proceeds from that point.
\subsection{Information Gain from Early \text{(C)}VQLS Updates}
\label{subsec:early_information_gain}

The early part of a \text{(C)}VQLS trajectory contains information about the local optimization landscape of the current IPM-generated linear system. Let $\boldsymbol{\theta}{m,k}^{(q),\mathrm{tar}}$ denote the best-cost parameter vector along a baseline trajectory. For a prefix of length $K{\mathrm{var}}$, the minimum achievable prediction error under squared loss is:
{\small
\begin{equation}
\mathcal{E}_{q}^{\star}(K_{\mathrm{var}})
=
\frac{1}{p_q}
\mathbb{E}
\left[
\left\|
\boldsymbol{\theta}_{m,k}^{(q),\mathrm{tar}}
-
\mathbb{E}
\left[
\boldsymbol{\theta}_{m,k}^{(q),\mathrm{tar}}
\mid
\mathbf{S}_{m,k}^{(q)}(K_{\mathrm{var}})
\right]
\right\|_2^2
\right].
\label{eq:conditional_prediction_error}
\end{equation}}
The conditional expectation is the optimal predictor from the observed prefix. Thus, $\mathcal{E}{q}^{\star}(K{\mathrm{var}})$ measures the uncertainty remaining about the best-cost parameter vector after the first $K_{\mathrm{var}}$ updates.

For the optimal conditional predictor, a longer prefix contains all the information available in a shorter prefix, and therefore

{\small
\begin{equation}
\mathcal{E}_{q}^{\star}(K_{\mathrm{var}}+1)
\leq
\mathcal{E}_{q}^{\star}(K_{\mathrm{var}}).
\label{eq:error_monotonicity}
\end{equation}}
However, the useful information gained by one additional variational update is measured by the marginal reduction
{\small
\begin{equation}
\Delta \mathcal{I}_{q}(K_{\mathrm{var}})
=
\mathcal{E}_{q}^{\star}(K_{\mathrm{var}})
-
\mathcal{E}_{q}^{\star}(K_{\mathrm{var}}+1).
\label{eq:marginal_information_gain}
\end{equation}}
Early updates generally provide greater information gain by revealing the dominant descent behavior, whereas later updates mainly refine an already low-cost region.

This interpretation is consistent with the variational-parameter dynamics. The update increment
{\small
\begin{equation}
\Delta \boldsymbol{\theta}_{m,k,r}^{(q)}
=
\boldsymbol{\theta}_{m,k,r+1}^{(q)}
-
\boldsymbol{\theta}_{m,k,r}^{(q)}
\label{eq:theta_increment}
\end{equation}}
reflects the local response of the ansatz, cost function, and optimizer to the system $\mathbf{A}{m,k}\mathbf{d}{m,k}=\mathbf{b}{m,k}$. It is related to the local gradient for gradient-based methods and represents the accepted search direction for gradient-free methods. Hence, $\mathbf{S}{m,k}^{(q)}(K_{\mathrm{var}})$ provides a compact dynamic signature of the current \text{(C)}VQLS solve.

The prefix should capture most of the useful information without requiring many updates. Since observing the prefix costs approximately $\tau_q K_{\mathrm{var}}$, its length is selected using prediction error, final \text{(C)}VQLS cost, remaining updates, and prefix length. The trainable model approximates the conditional predictor in \eqref{eq:lstm_conditional_predictor} as
{\small
\begin{equation}
f_{\varphi_q}^{\mathrm{var}}
\left(
\mathbf{S}_{m,k}^{(q)}(K_{\mathrm{var}})
\right)
\approx
\mathbb{E}
\left[
\boldsymbol{\theta}_{m,k}^{(q),\mathrm{tar}}
\mid
\mathbf{S}_{m,k}^{(q)}(K_{\mathrm{var}})
\right].
\label{eq:lstm_conditional_predictor}
\end{equation}}
Accordingly, the early trajectory is used to infer a better continuation point and avoid low-information refinement updates.

\begin{remark}[Gradient content of the prefix]
\label{rem:gradient_content}
The dynamic-signature interpretation of the prefix can be made precise for gradient-based \text{(C)}VQLS updates. Under gradient descent or momentum-based rules, the update increment at step $r$ satisfies
\begin{equation}
\Delta \boldsymbol{\theta}_{m,k,r}^{(q)}
\approx
-\eta_r \nabla_{\boldsymbol{\theta}} C_q\!\left(\boldsymbol{\theta}_{m,k,r}^{(q)};\mathbf{A}_{m,k},\mathbf{b}_{m,k}\right),
\label{eq:grad_content}
\end{equation}
so the first increment ($r=0$) encodes the gradient direction at initialization, and differences between successive increments encode approximate curvature information via
\begin{equation}
\frac{\Delta\boldsymbol{\theta}_{m,k,r}^{(q)}}{\eta_r}
-
\frac{\Delta\boldsymbol{\theta}_{m,k,r-1}^{(q)}}{\eta_{r-1}}
\approx
-\nabla^2_{\boldsymbol{\theta}} C_q\!\left(\boldsymbol{\theta}_{m,k,r-1}^{(q)}\right)
\Delta\boldsymbol{\theta}_{m,k,r-1}^{(q)}.
\label{eq:curvature_content}
\end{equation}
For gradient-free updates (e.g., SPSA or COBYLA), the accepted step direction still encodes the local cost response to parameter perturbations, which is a finite-difference approximation to the gradient. In both cases, the prefix $\mathbf{S}_{m,k}^{(q)}(K_{\mathrm{var}})$ constitutes a compact dynamic signature of the local cost landscape induced by the linear system $(\mathbf{A}_{m,k},\mathbf{b}_{m,k})$, which is the theoretical basis for using it as the conditioning input to the trainable model.
\end{remark}
\subsection{Trajectory-Based Target Definition}

Let $m\in\{1,\dots,M\}$ index OPF operating scenarios generated by perturbing the base loading condition. For each scenario, IPM produces a sequence of Newton systems
{\small
\begin{equation}
\mathbf{A}_{m,k}\mathbf{d}_{m,k}=\mathbf{b}_{m,k},
\label{eq:mk_linear_system}
\end{equation}}
where $k$ is the IPM iteration index. Each system in \eqref{eq:mk_linear_system} is then solved using the corresponding baseline quantum linear solver.

For solver $q\in\{\mathrm{VQLS},\mathrm{CVQLS}\}$, the baseline variational optimization generates the parameter sequence
{\small
\begin{equation}
\Theta_{m,k}^{(q)}
=
\left[
\boldsymbol{\theta}_{m,k,0}^{(q)},
\boldsymbol{\theta}_{m,k,1}^{(q)},
\dots,
\boldsymbol{\theta}_{m,k,R_{m,k}^{(q)}}^{(q)}
\right],
\label{eq:theta_trajectory}
\end{equation}}
where $\boldsymbol{\theta}_{m,k,r}^{(q)}\in\mathbb{R}^{p_q}$ is the parameter vector at variational update $r$, and $R_{m,k}^{(q)}$ is the total number of baseline variational updates.

The supervised target is selected as the best-cost parameter vector observed along the baseline sequence:
{\small
\begin{equation}
r_{m,k}^{(q),\mathrm{tar}}
=
\arg\min_{0\le r\le R_{m,k}^{(q)}}
C_q\!\left(\boldsymbol{\theta}_{m,k,r}^{(q)}\right),
\label{eq:best_theta_index}
\end{equation}}
{\small
\begin{equation}
\boldsymbol{\theta}_{m,k}^{(q),\mathrm{tar}}
=
\boldsymbol{\theta}_{m,k,r_{m,k}^{(q),\mathrm{tar}}}^{(q)} .
\label{eq:best_theta}
\end{equation}}
Thus, the learning target is not the exact linear-system solution and is not defined by a direct one-shot estimate of the final answer. Instead, it is chosen from the solver-generated trajectory itself, consistent with the learner's role as an embedded trajectory-based projection mechanism.

\begin{remark}[Optimality of the trajectory target]
\label{rem:target_optimality}
The choice of $\boldsymbol{\theta}_{m,k}^{(q),\mathrm{tar}}$ as the best-cost point along the baseline trajectory provides the following guarantee: if the LSTM prediction is exact, i.e., $\hat{\boldsymbol{\theta}}_{m,k}^{(q),\mathrm{tar}} = \boldsymbol{\theta}_{m,k}^{(q),\mathrm{tar}}$, then initializing the variational optimization from the predicted point achieves
\begin{equation}
C_q\!\left(\hat{\boldsymbol{\theta}}_{m,k}^{(q),\mathrm{tar}}\right)
\leq
C_q\!\left(\boldsymbol{\theta}_{m,k,r}^{(q)}\right),
\quad \forall\, r \in \{0,\dots,R_{m,k}^{(q)}\},
\label{eq:target_optimality}
\end{equation}
using at most $K_{\mathrm{var}}$ variational updates instead of $R_{m,k}^{(q)}$. Moreover, if the predicted point already satisfies the solver tolerance $\epsilon_q$, no further updates are required and the total update count is exactly $K_{\mathrm{var}}$. When the prediction is imperfect, the variational optimization may continue from $\hat{\boldsymbol{\theta}}_{m,k}^{(q),\mathrm{tar}}$, but the proximity of the initialization to the low-cost region reduces the number of additional updates required relative to a generic initialization at $\boldsymbol{\theta}_{m,k,0}^{(q)}$.
\end{remark}

\subsection{Parameter-Sequence Initialization Model}

For each baseline sequence, the first $K_{\mathrm{var}}$ parameter vectors form the input sequence
{\small
\begin{equation}
\mathbf{S}_{m,k}^{(q)}(K_{\mathrm{var}})
=
\left[
\boldsymbol{\theta}_{m,k,0}^{(q)},
\boldsymbol{\theta}_{m,k,1}^{(q)},
\dots,
\boldsymbol{\theta}_{m,k,K_{\mathrm{var}}-1}^{(q)}
\right].
\label{eq:prefix_sequence}
\end{equation}}
The trainable initialization model predicts the target parameter vector as
{\small
\begin{equation}
\hat{\boldsymbol{\theta}}_{m,k}^{(q),\mathrm{tar}}
=
f_{\varphi_q}^{\mathrm{var}}
\!\left(
\mathbf{S}_{m,k}^{(q)}(K_{\mathrm{var}})
\right),
\label{eq:lstm_predictor}
\end{equation}}
where $\varphi_q$ denotes the model parameters for solver $q$. The model is trained by minimizing
{\small
\begin{equation}
\mathcal{L}_{\mathrm{var}}^{(q)}(\varphi_q)
=
\frac{1}{|\mathcal{D}_{\mathrm{var,tr}}|}
\sum_{(m,k)\in\mathcal{D}_{\mathrm{var,tr}}}
\frac{
\left\|
f_{\varphi_q}^{\mathrm{var}}
\!\left(
\mathbf{S}_{m,k}^{(q)}(K_{\mathrm{var}})
\right)
-
\boldsymbol{\theta}_{m,k}^{(q),\mathrm{tar}}
\right\|_2^2
}{p_q}.
\label{eq:lstm_loss}
\end{equation}}

\begin{remark}[LSTM as a universal sequence-to-parameter approximator]
\label{rem:lstm_universality}
The mapping $\mathbf{S}_{m,k}^{(q)}(K_{\mathrm{var}}) \mapsto \boldsymbol{\theta}_{m,k}^{(q),\mathrm{tar}}$ is an output functional of the discrete-time nonlinear dynamical system
\begin{equation}
\boldsymbol{\theta}_{m,k,r+1}^{(q)}
=
\Phi_q\!\left(\boldsymbol{\theta}_{m,k,r}^{(q)};\,\mathbf{A}_{m,k},\mathbf{b}_{m,k}\right),
\label{eq:vqls_dynamics}
\end{equation}
where $\Phi_q$ is the \text{(C)}VQLS parameter update rule. Under mild stability conditions (specifically, the fading-memory property, which requires that the influence of early inputs on the system state decays with time), output functionals of such discrete-time nonlinear systems can be approximated to arbitrary accuracy by LSTMs with sufficient hidden dimension~\cite{schafer2007recurrent}. The LSTM in \eqref{eq:lstm_predictor} therefore provides a universal function class for approximating the conditional mean predictor in \eqref{eq:lstm_conditional_predictor} from the observed prefix.
\end{remark}

When angular parameters are used, they are represented by sine--cosine pairs and normalized consistently during training and inference. Separate models are used when the solver type, ansatz, or parameter dimension differs.

For a new IPM-generated system $\mathbf{A}\mathbf{d}=\mathbf{b}$, the \text{(C)}VQLS optimizer is first run for $K_{\mathrm{var}}$ updates. The resulting prefix is then passed to the trained model, which predicts
{\small
\begin{equation}
\hat{\boldsymbol{\theta}}^{(q),\mathrm{tar}}
=
f_{\varphi_q}^{\mathrm{var}}
\!\left(
\mathbf{S}^{(q)}(K_{\mathrm{var}})
\right),
\label{eq:test_prediction}
\end{equation}}
and the remaining optimization is initialized as
{\small
\begin{equation}
\boldsymbol{\theta}_{K_{\mathrm{var}}}^{(q)}
\leftarrow
\hat{\boldsymbol{\theta}}^{(q),\mathrm{tar}}.
\label{eq:initialization_replace}
\end{equation}}
If the initialized point already satisfies the solver tolerance, the solve is terminated; otherwise, the variational optimization continues from the predicted parameter vector. Therefore, the model uses the early sequence to improve the remaining \text{(C)}VQLS search rather than replacing it with direct parameter regression.

\subsection{Prefix-Length Selection}

The prefix length $K_{\mathrm{var}}$ controls how many early \text{(C)}VQLS updates are observed before the trainable initializer is invoked. A short prefix may not contain enough information about the parameter search, while a long prefix reduces the benefit of initialization. Therefore, $K_{\mathrm{var}}$ is selected from a candidate set $\mathcal{K}_{\mathrm{var}}$ using validation performance.

For each $K_{\mathrm{var}}\in\mathcal{K}_{\mathrm{var}}$, the trained model is evaluated using the parameter-prediction error $E_{\theta}(K_{\mathrm{var}})$, the final \text{(C)}VQLS cost $\bar{C}_{\mathrm{final}}(K_{\mathrm{var}})$, and the average number of remaining variational updates $\bar{N}_{\mathrm{rem}}(K_{\mathrm{var}})$. Each term is normalized before forming the validation score:
{\small
\begin{equation}
\tilde{Y}(K_{\mathrm{var}})
=
\frac{
Y(K_{\mathrm{var}})
}{
\max_{K\in\mathcal{K}_{\mathrm{var}}}Y(K)
+\epsilon
},
\label{eq:normalized_metric}
\end{equation}}
where $Y$ denotes $E_{\theta}$, $\bar{C}_{\mathrm{final}}$, or $\bar{N}_{\mathrm{rem}}$, and $\epsilon>0$ avoids division by zero. The validation score is
{\small
\begin{equation}
\begin{aligned}
\mathcal{J}_{\mathrm{pre}}(K_{\mathrm{var}})
=&\;
\rho_1 \tilde{E}_{\theta}(K_{\mathrm{var}})
+\rho_2 \tilde{C}_{\mathrm{final}}(K_{\mathrm{var}}) \\
&+
\rho_3 \tilde{N}_{\mathrm{rem}}(K_{\mathrm{var}})
+\rho_4 \frac{K_{\mathrm{var}}}{K_{\mathrm{var}}^{\max}} ,
\end{aligned}
\label{eq:prefix_score}
\end{equation}}
where $\rho_j\ge 0$, $j=1,\dots,4$, are weighting factors. The selected prefix length is
{\small
\begin{equation}
K_{\mathrm{var}}^{\star}
=
\min\arg\min_{K_{\mathrm{var}}\in\mathcal{K}_{\mathrm{var}}}
\mathcal{J}_{\mathrm{pre}}(K_{\mathrm{var}}),
\label{eq:prefix_best}
\end{equation}}
where the outer minimum selects the shortest prefix among candidates with comparable validation performance.

\section{Trajectory-Informed IPM Central Path Projection} 
\label{sec:learning_qipm}

The outer-level model addresses the number of IPM iterations. The premise is that the first few primal-dual IPM iterates provide a compact representation of the local central path direction, encoding the evolution of primal and dual variables, slack variables, and the barrier parameter. An LSTM uses this prefix to project a later central path state; the projected point is then restored and refined by IPM.

\begin{figure}[!t]
    \centering
    \includegraphics[width=0.9\columnwidth]{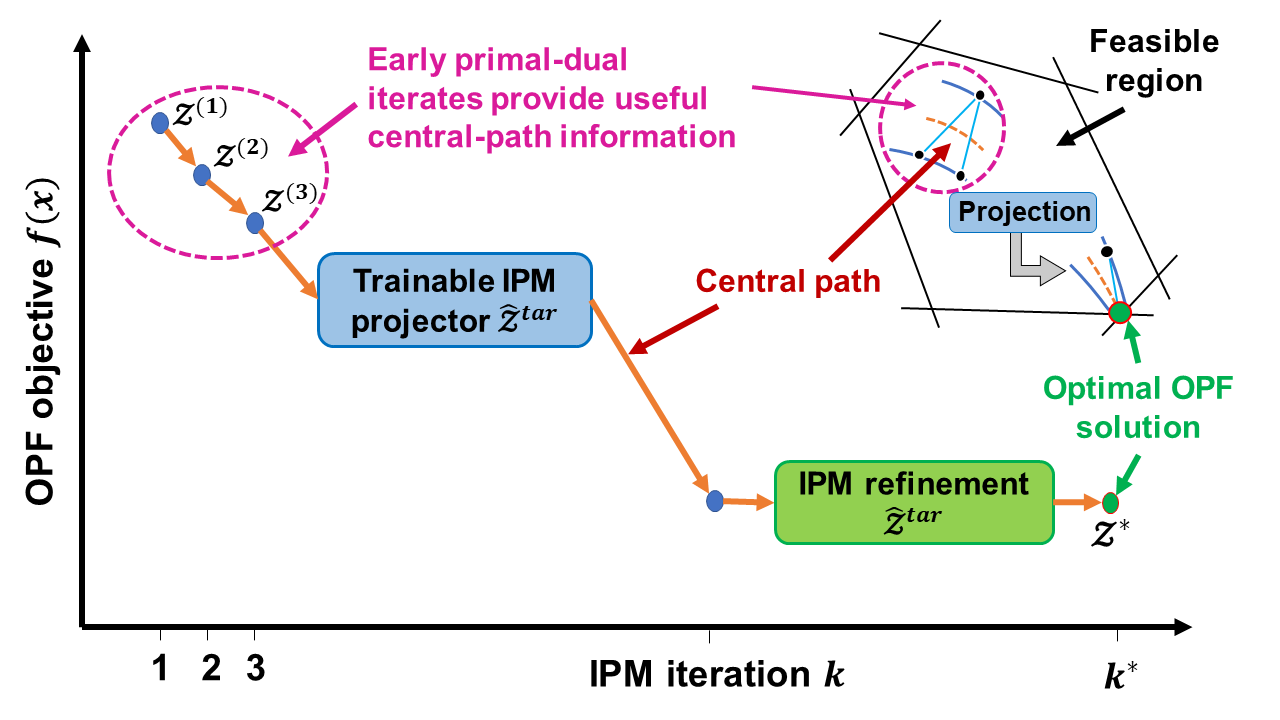}
    \caption{Trajectory-informed IPM central path projection.}
    \label{fig:ipm_central_path_projection}
\end{figure}

Fig.~\ref{fig:ipm_central_path_projection} illustrates this mechanism.
\subsection{Central Path Information in Early IPM Iterates}
\label{subsec:ipm_early_information}

The proposed OPF-level projection is based on the fact that early IPM iterates contain information about the local central path geometry of the current OPF instance. At iteration $k$, the primal-dual IPM computes a Newton correction from
{\small
\begin{equation}
\mathbf{J}(\mathbf{z}_{m,k})\Delta \mathbf{z}_{m,k}
=
-\mathbf{F}(\mathbf{z}_{m,k}),
\label{eq:ipm_newton_local_info}
\end{equation}}
and updates the state according to
{\small
\begin{equation}
\mathbf{z}_{m,k+1}
=
\mathbf{z}_{m,k}
+
\alpha_{m,k}\Delta \mathbf{z}_{m,k},
\label{eq:ipm_state_update_info}
\end{equation}}
where $\alpha_{m,k}$ is the step size. Therefore, the observed state increments $\boldsymbol{\delta}_{m,k}^{\mathrm{ipm}}$ in~\eqref{eq:observed_ipm_increment} encode the local response of the KKT residual, active-limit behavior, slack variables, dual variables, and barrier parameter for the current scenario.
{\small
\begin{equation}
\boldsymbol{\delta}_{m,k}^{\mathrm{ipm}}
=
\mathbf{z}_{m,k+1}
-
\mathbf{z}_{m,k}.
\label{eq:observed_ipm_increment}
\end{equation}}
Thus, the prefix is not only a collection of early states; it is a short observation of the Newton dynamics that govern the movement toward the central path.

Let $\mathbf{z}_{m}^{\mathrm{tar}}$ denote a later near-converged IPM point. For a prefix of length $K_{\mathrm{ipm}}$, define the minimum achievable prediction error under squared loss as
{\small
\begin{equation}
\mathcal{E}_{\mathrm{ipm}}^{\star}(K_{\mathrm{ipm}})
=
\mathbb{E}
\left[
\left\|
\mathbf{z}_{m}^{\mathrm{tar}}
-
\mathbb{E}
\left[
\mathbf{z}_{m}^{\mathrm{tar}}
\mid
\mathbf{P}_{m}(K_{\mathrm{ipm}})
\right]
\right\|_{\boldsymbol{\Omega}_{\mathrm{ipm}}}^2
\right],
\label{eq:ipm_conditional_error}
\end{equation}}
where $\|\mathbf{a}\|_{\boldsymbol{\Omega}_{\mathrm{ipm}}}^2 = \mathbf{a}^{\top}\boldsymbol{\Omega}_{\mathrm{ipm}}\mathbf{a}$ is a weighted norm used to account for different scales of the primal, dual, slack, and barrier components. The conditional expectation is the optimal predictor of the later central path point given the observed prefix. Hence, $\mathcal{E}_{\mathrm{ipm}}^{\star}(K_{\mathrm{ipm}})$ measures the remaining uncertainty about the later IPM state after observing the first $K_{\mathrm{ipm}}$ iterates.

Since a longer prefix contains at least the same information as a shorter prefix,
{\small
\begin{equation}
\mathcal{E}_{\mathrm{ipm}}^{\star}(K_{\mathrm{ipm}}+1)
\leq
\mathcal{E}_{\mathrm{ipm}}^{\star}(K_{\mathrm{ipm}}).
\label{eq:ipm_error_monotonicity}
\end{equation}}
The marginal value of one additional IPM iteration can therefore be written as
{\small
\begin{equation}
\Delta \mathcal{I}_{\mathrm{ipm}}(K_{\mathrm{ipm}})
=
\mathcal{E}_{\mathrm{ipm}}^{\star}(K_{\mathrm{ipm}})
-
\mathcal{E}_{\mathrm{ipm}}^{\star}(K_{\mathrm{ipm}}+1).
\label{eq:ipm_marginal_information_gain}
\end{equation}}
Early IPM iterations typically provide large information gain because they reveal the dominant movement of the primal-dual variables and the active-constraint pattern. After the trajectory enters the local convergence region, additional iterations mainly reduce the residual and complementarity error. These later iterations remain important for final accuracy, but their marginal information for predicting the central path direction is smaller.

This motivates using a short prefix to project the trajectory to a later IPM state. The learned model approximates
{\small
\begin{equation}
g_{\psi}^{\mathrm{ipm}}
\left(
\mathbf{P}_{m}(K_{\mathrm{ipm}})
\right)
\approx
\mathbb{E}
\left[
\mathbf{z}_{m}^{\mathrm{tar}}
\mid
\mathbf{P}_{m}(K_{\mathrm{ipm}})
\right].
\label{eq:ipm_conditional_predictor}
\end{equation}}
The predicted point is restored and refined by IPM, reducing low-information intermediate iterations while retaining the solver's feasibility and optimality enforcement.
For OPF scenario $m$, the IPM state at iteration $k$ is defined as
{\small
\begin{equation}
\mathbf{z}_{m,k}
=
\left[
\mathbf{x}_{m,k}^{\top},
\boldsymbol{\lambda}_{m,k}^{\top},
\boldsymbol{\nu}_{m,k}^{\top},
\mathbf{s}_{m,k}^{\top},
\mu_{m,k}
\right]^{\top},
\label{eq:ipm_state_vector}
\end{equation}}
where $\mathbf{x}_{m,k}$ is the OPF primal-variable vector, $\boldsymbol{\lambda}_{m,k}$ and $\boldsymbol{\nu}_{m,k}$ are the dual variables associated with equality and inequality constraints, $\mathbf{s}_{m,k}$ is the slack-variable vector, and $\mu_{m,k}$ is the barrier parameter. One IPM solve therefore generates the central path sequence
{\small
\begin{equation}
\mathcal{Z}_{m}
=
\left[
\mathbf{z}_{m,0},
\mathbf{z}_{m,1},
\dots,
\mathbf{z}_{m,T_m}
\right],
\label{eq:ipm_trajectory}
\end{equation}}
where $T_m$ is the number of IPM iterations required for convergence.

The first $K_{\mathrm{ipm}}$ iterates form the IPM prefix
{\small
\begin{equation}
\mathbf{P}_{m}(K_{\mathrm{ipm}})
=
\left[
\mathbf{z}_{m,0},
\mathbf{z}_{m,1},
\dots,
\mathbf{z}_{m,K_{\mathrm{ipm}}-1}
\right].
\label{eq:opf_prefix}
\end{equation}}

\subsection{LSTM-Based Central Path Projection}
\label{subsec:lstm_opf}

The IPM-level LSTM model $g_{\psi}^{\mathrm{ipm}}(\cdot)$ predicts a later primal-dual point from the prefix sequence:
{\small
\begin{equation}
\hat{\mathbf{z}}_{m}^{\mathrm{tar}}
=
g_{\psi}^{\mathrm{ipm}}
\!\left(
\mathbf{P}_{m}(K_{\mathrm{ipm}})
\right),
\label{eq:opf_lstm_predictor}
\end{equation}}
where $\psi$ denotes the trainable parameters of the IPM-level LSTM. Let $\boldsymbol{\eta}_{m,k}$ and $\boldsymbol{\chi}_{m,k}$ denote the hidden and cell states. The recurrent update is
{\small
\begin{equation}
\left(\boldsymbol{\eta}_{m,k},\boldsymbol{\chi}_{m,k}\right)
=
\mathrm{LSTM}_{\psi}
\left(
\mathbf{z}_{m,k},
\boldsymbol{\eta}_{m,k-1},
\boldsymbol{\chi}_{m,k-1}
\right),
\label{eq:opf_lstm_recurrence}
\end{equation}}
and the final hidden state is mapped to the predicted IPM point by
{\small
\begin{equation}
\hat{\mathbf{z}}_{m}^{\mathrm{tar}}
=
\mathbf{W}_{\mathrm{ipm}}
\boldsymbol{\eta}_{m,K_{\mathrm{ipm}}-1}
+
\boldsymbol{\beta}_{\mathrm{ipm}},
\label{eq:opf_lstm_readout}
\end{equation}}
where $\mathbf{W}_{\mathrm{ipm}}$ and $\boldsymbol{\beta}_{\mathrm{ipm}}$ are the readout weight matrix and bias vector.

The target point is selected as a later near-converged IPM state,
{\small
\begin{equation}
\mathbf{z}_{m}^{\mathrm{tar}}=\mathbf{z}_{m,t^\star},
\label{eq:opf_target}
\end{equation}}
where $t^\star$ denotes a later iteration close to convergence. In the simplest implementation, $t^\star=T_m$ and the final IPM iterate is used. Alternatively, $t^\star$ may be chosen as the best feasible iterate based on the KKT residual, objective value, and constraint violation.

\subsection{Training Objective}
\label{subsec:opf_lstm_data}

The OPF-level training set is formed from disjoint operating scenarios as
{\small
\begin{equation}
\mathcal{D}_{\mathrm{ipm,tr}}
=
\left\{
\left(
\mathbf{P}_{m}(K_{\mathrm{ipm}}),
\mathbf{z}_{m}^{\mathrm{tar}}
\right)
\right\}_{m\in\mathcal{I}_{\mathrm{tr}}},
\label{eq:opf_training_set}
\end{equation}}
where $\mathcal{I}_{\mathrm{tr}}$ is the index set of training scenarios. Validation and test sets, denoted by $\mathcal{D}_{\mathrm{ipm,val}}$ and $\mathcal{D}_{\mathrm{ipm,te}}$, are formed from separate OPF scenarios.

The model is trained by minimizing
{\small
\begin{equation}
\mathcal{L}_{\mathrm{ipm}}(\psi)
=
\frac{1}{|\mathcal{D}_{\mathrm{ipm,tr}}|}
\sum_{m\in\mathcal{I}_{\mathrm{tr}}}
\left\|
g_{\psi}^{\mathrm{ipm}}
\left(
\mathbf{P}_{m}(K_{\mathrm{ipm}})
\right)
-
\mathbf{z}_{m}^{\mathrm{tar}}
\right\|_2^2.
\label{eq:opf_lstm_loss}
\end{equation}}
If the components of $\mathbf{z}_{m,k}$ have different numerical scales, the IPM state vectors are normalized during training, and the same normalization is applied during inference.
\subsection{IPM Prefix-Length Selection}
\label{subsec:ipm_prefix_selection}

The prefix length $K_{\mathrm{ipm}}$ is selected using validation trajectories based on IPM-state prediction accuracy and the correction iterations required after restoration. Although longer prefixes provide more central-path information, improvements beyond the first few iterations are marginal. In particular, using more than three iterates did not meaningfully reduce prediction error or correction effort, while increasing the iterations completed before projection. Therefore, $K_{\mathrm{ipm}}=3$ was selected as the shortest prefix that provided stable predictions and reliable IPM refinement, and it was used for all test systems.

\subsection{Restoration of Predicted IPM Point}
\label{subsec:opf_restoration}

The predicted IPM point may not satisfy the numerical requirements of the primal-dual IPM. Therefore, it is restored before being used by the solver:
{\small
\begin{equation}
\tilde{\mathbf{z}}_{m}^{\mathrm{tar}}
=
\mathcal{R}
\left(
\hat{\mathbf{z}}_{m}^{\mathrm{tar}}
\right).
\label{eq:opf_restored_state}
\end{equation}}
For $\mathbf{z}=[\mathbf{x}^{\top},\boldsymbol{\lambda}^{\top},\boldsymbol{\nu}^{\top},\mathbf{s}^{\top},\mu]^{\top}$, the restoration operator enforces the basic admissibility conditions required for IPM continuation:
{\small
\begin{equation}
\begin{aligned}
\tilde{\mathbf{x}} &= \Pi_{\mathcal{X}}(\hat{\mathbf{x}}), &
\tilde{\mathbf{s}} &= \max\{\hat{\mathbf{s}},\epsilon_s\mathbf{1}\}, \\
\tilde{\boldsymbol{\nu}} &= \max\{\hat{\boldsymbol{\nu}},\epsilon_\nu\mathbf{1}\}, &
\tilde{\mu} &= \frac{\tilde{\mathbf{s}}^{\top}\tilde{\boldsymbol{\nu}}}{n_h}.
\end{aligned}
\label{eq:restore_compact}
\end{equation}}
Here, $\Pi_{\mathcal{X}}(\cdot)$ denotes projection onto the primal-variable bounds, $\epsilon_s$ and $\epsilon_\nu$ are small positive constants, and $n_h$ is the number of inequality constraints.

\begin{proposition}[Unconditional admissibility of the restored point]
\label{prop:restoration}
For any predicted point $\hat{\mathbf{z}}^{\mathrm{tar}}$, the operator $\mathcal{R}(\cdot)$ defined in \eqref{eq:restore_compact} produces a point $\tilde{\mathbf{z}} = \mathcal{R}(\hat{\mathbf{z}}^{\mathrm{tar}})$ satisfying the following properties unconditionally, irrespective of the prediction quality of $\hat{\mathbf{z}}^{\mathrm{tar}}$:
\begin{enumerate}
\item[(i)] \textbf{Strict interior feasibility}: $\tilde{s}_i \geq \epsilon_s > 0$ and $\tilde{\nu}_i \geq \epsilon_\nu > 0$ for all $i = 1,\dots,n_h$.
\item[(ii)] \textbf{Valid barrier parameter}: $\tilde{\mu} \geq \epsilon_s \epsilon_\nu > 0$.
\item[(iii)] \textbf{Centrality ratio bound}: each complementarity ratio $\gamma_i := \tilde{s}_i\tilde{\nu}_i/\tilde{\mu}$ satisfies
\begin{equation}
0 \;<\;
\underline{\gamma}
:=
\frac{n_h \epsilon_s \epsilon_\nu}{\tilde{\mathbf{s}}^{\top}\tilde{\boldsymbol{\nu}}}
\;\leq\;
\gamma_i
\;\leq\;
\bar{\gamma}
:=
\frac{n_h \max_j(\tilde{s}_j\tilde{\nu}_j)}{\tilde{\mathbf{s}}^{\top}\tilde{\boldsymbol{\nu}}}
\;\leq\; n_h.
\label{eq:centrality_bounds}
\end{equation}
\end{enumerate}
In particular, $\tilde{\mathbf{z}}$ constitutes a valid strictly interior starting point for IPM continuation.
\end{proposition}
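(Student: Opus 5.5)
The proof is a direct componentwise argument from the definition of $\mathcal{R}$ in \eqref{eq:restore_compact}; no property of the LSTM prediction is used, which is why the conclusions hold unconditionally. I will take $\epsilon_s,\epsilon_\nu>0$ and $n_h\geq 1$ as standing assumptions. I will also note that the $\max$ in \eqref{eq:restore_compact} acts entrywise on vectors of length $n_h$.

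For (i), fix any $i$. By definition $\tilde s_i=\max\{\hat s_i,\epsilon_s\}\geq\epsilon_s$ and $\tilde\nu_i=\max\{\hat\nu_i,\epsilon_\nu\}\geq\epsilon_\nu$, whatever the values of $\hat s_i,\hat\nu_i$. The projection $\Pi_{\mathcal X}$ plays no role here; it only guarantees $\tilde{\mathbf x}\in\mathcal X$, which I will record for the final sentence of the proposition.

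For (ii), (i) gives $\tilde s_i\tilde\nu_i\geq\epsilon_s\epsilon_\nu>0$ for every $i$. Summing over $i$ yields $\tilde{\mathbf s}^\top\tilde{\boldsymbol\nu}\geq n_h\epsilon_s\epsilon_\nu$. Dividing by $n_h$ gives $\tilde\mu\geq\epsilon_s\epsilon_\nu$.

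For (iii), I substitute $\tilde\mu=\tilde{\mathbf s}^\top\tilde{\boldsymbol\nu}/n_h$, so that $\gamma_i=n_h\tilde s_i\tilde\nu_i/\tilde{\mathbf s}^\top\tilde{\boldsymbol\nu}$. The two chain bounds follow from $\epsilon_s\epsilon_\nu\leq\tilde s_i\tilde\nu_i\leq\max_j\tilde s_j\tilde\nu_j$, since the common denominator is positive by (ii). Positivity of $\underline\gamma$ is immediate. The final bound $\bar\gamma\leq n_h$ holds because every product is positive, so $\max_j\tilde s_j\tilde\nu_j\leq\sum_j\tilde s_j\tilde\nu_j=\tilde{\mathbf s}^\top\tilde{\boldsymbol\nu}$.

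The concluding statement then combines three facts: $\tilde{\mathbf x}$ lies within the primal bounds, $(\tilde{\mathbf s},\tilde{\boldsymbol\nu})$ is strictly positive, and $\tilde\mu$ is positive and consistent with the complementarity products. These are exactly the admissibility requirements of a primal-dual IPM iterate, namely a strictly interior point with a well-defined barrier parameter. This holds even though equality feasibility $\mathbf g(\tilde{\mathbf x})=\mathbf 0$ is not claimed.

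There is no real obstacle. The only points needing care are the assumptions $n_h\geq1$ and $\epsilon_s,\epsilon_\nu>0$, and using positivity of every product to bound the maximum by the sum. I will also remark that the bound $\bar\gamma\leq n_h$ is generally loose and carries no quantitative centrality guarantee in the sense of a neighborhood $\mathcal N_{-\infty}(\gamma)$; it shows only finiteness of the ratios.
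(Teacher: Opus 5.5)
Your proposal is correct and matches the paper's proof essentially step for step. You get (i) from the entrywise $\max$, you get (ii) by summing the products $\tilde s_i\tilde\nu_i\geq\epsilon_s\epsilon_\nu$, and you get (iii) by writing $\gamma_i = n_h\tilde s_i\tilde\nu_i/\tilde{\mathbf s}^{\top}\tilde{\boldsymbol\nu}$ and using $\max_j \tilde s_j\tilde\nu_j\le\sum_j \tilde s_j\tilde\nu_j$. Your extra caveats are accurate refinements rather than a different route: the explicit $n_h\ge 1$ and $\epsilon_s,\epsilon_\nu>0$, and the observation that $\bar\gamma\le n_h$ gives no real neighborhood-type centrality guarantee.
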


\begin{IEEEproof}
\textit{(i)} By construction: $\tilde{s}_i = \max\{\hat{s}_i,\epsilon_s\} \geq \epsilon_s > 0$ and $\tilde{\nu}_i = \max\{\hat{\nu}_i,\epsilon_\nu\} \geq \epsilon_\nu > 0$.
\textit{(ii)} From (i): $\tilde{\mu} = \tilde{\mathbf{s}}^{\top}\tilde{\boldsymbol{\nu}}/n_h \geq n_h\epsilon_s\epsilon_\nu/n_h = \epsilon_s\epsilon_\nu > 0$.
\textit{(iii)} Write $\gamma_i = n_h\tilde{s}_i\tilde{\nu}_i/(\tilde{\mathbf{s}}^{\top}\tilde{\boldsymbol{\nu}})$.
Lower bound: $\tilde{s}_i\tilde{\nu}_i \geq \epsilon_s\epsilon_\nu$ yields $\gamma_i \geq n_h\epsilon_s\epsilon_\nu/(\tilde{\mathbf{s}}^{\top}\tilde{\boldsymbol{\nu}}) = \underline{\gamma}$.
Upper bound: $\tilde{s}_i\tilde{\nu}_i \leq \max_j(\tilde{s}_j\tilde{\nu}_j)$ yields $\gamma_i \leq n_h\max_j(\tilde{s}_j\tilde{\nu}_j)/(\tilde{\mathbf{s}}^{\top}\tilde{\boldsymbol{\nu}}) = \bar{\gamma}$.
Finally, since $\max_j(\tilde{s}_j\tilde{\nu}_j) \leq \sum_j\tilde{s}_j\tilde{\nu}_j = \tilde{\mathbf{s}}^{\top}\tilde{\boldsymbol{\nu}}$ (all terms positive), we have $\bar{\gamma} \leq n_h$.
\end{IEEEproof}

\begin{remark}
Property (i) of Proposition~\ref{prop:restoration} guarantees that $\tilde{\mathbf{z}}$ satisfies the strict interior condition required for all primal-dual IPM iterations, regardless of how inaccurate the LSTM prediction is. Property (iii) bounds the centrality deviation from perfect complementarity ($\gamma_i = 1$ for all $i$): the upper bound $\bar{\gamma} \leq n_h$ is consistent with the standard $\infty$-neighborhood conditions used in IPM convergence analysis~\cite{wright1997primal}. Restoration thus acts as an unconditional safety layer — the subsequent IPM iterations are responsible for driving $\gamma_i \to 1$ and enforcing feasibility and optimality.
\end{remark}

\subsection{Embedding in Q-IPM Solver}
\label{subsec:opf_lstm_online}

For a new OPF scenario, IPM is first run for $K_{\mathrm{ipm}}$ iterations, producing
{\small
\begin{equation}
\mathbf{P}(K_{\mathrm{ipm}})
=
\left[
\mathbf{z}_{0},
\mathbf{z}_{1},
\dots,
\mathbf{z}_{K_{\mathrm{ipm}}-1}
\right].
\label{eq:opf_test_prefix}
\end{equation}}
The trained IPM-level LSTM predicts
{\small
\begin{equation}
\hat{\mathbf{z}}^{\mathrm{tar}}
=
g_{\psi}^{\mathrm{ipm}}
\left(
\mathbf{P}(K_{\mathrm{ipm}})
\right).
\label{eq:opf_test_prediction}
\end{equation}}
After restoration, the IPM state is reset as
{\small
\begin{equation}
\mathbf{z}_{K_{\mathrm{ipm}}}
\leftarrow
\mathcal{R}
\left(
\hat{\mathbf{z}}^{\mathrm{tar}}
\right).
\label{eq:opf_projection}
\end{equation}}
The IPM then continues from this projected point until the stopping tolerance $\epsilon_{\mathrm{IPM}}$ is satisfied.

Therefore, the proposed framework reduces computational effort at two connected levels: the \text{(C)}VQLS-level model reduces the variational effort per Newton solve, while the IPM-level model reduces the number of outer IPM iterations by projecting the central path toward a near-solution region.

\begin{remark}[Inexact Newton direction quality]
\label{rem:inexact_newton}
The \text{(C)}VQLS-recovered direction $\hat{\mathbf{d}}^{(k)}$ is generally an inexact solution to \eqref{eq:linear_system_general}. Define the relative linear-system residual
\begin{equation}
\rho_k
:=
\frac{\bigl\|\mathbf{A}^{(k)}\hat{\mathbf{d}}^{(k)} - \mathbf{b}^{(k)}\bigr\|}
{\bigl\|\mathbf{b}^{(k)}\bigr\|}.
\label{eq:inexact_residual}
\end{equation}
By standard results for inexact Newton methods applied to KKT systems~\cite{dembo1982inexact,wright1997primal}, the IPM iteration continues to reduce the KKT residual $\|\mathbf{F}(\mathbf{z}^{(k)})\|$ provided $\rho_k < 1$, and the convergence rate degrades gracefully as $\rho_k$ increases (recovering the standard local convergence rate as $\rho_k \to 0$). The \text{(C)}VQLS cost $C_q(\boldsymbol{\theta})$ is directly related to $\rho_k$: when $C_q \to 0$, the ansatz state $\mathbf{A}|d(\boldsymbol{\theta})\rangle$ becomes proportional to $|b\rangle$, which corresponds to $\rho_k \to 0$. The proposed LSTM initialization targets the low-cost parameter region identified by the early trajectory prefix, yielding small $\rho_k$ and fewer \text{(C)}VQLS updates, as reflected in the small OPF objective deviations reported in Table~\ref{tab:opf_objective_error}.
\end{remark}

Fig.~\ref{fig:proposed_nested_framework} summarizes the proposed trajectory-informed extension of the baseline Q-IPM structure. The left side shows the \text{(C)}VQLS-level initialization from the early parameter prefix $\mathbf{S}^{(q)}(K_{\mathrm{var}})$, while the right side shows the IPM-level projection from the early primal-dual prefix $\mathbf{P}(K_{\mathrm{ipm}})$.

\begin{figure}[!t]
    \centering
    \includegraphics[width=1\columnwidth]{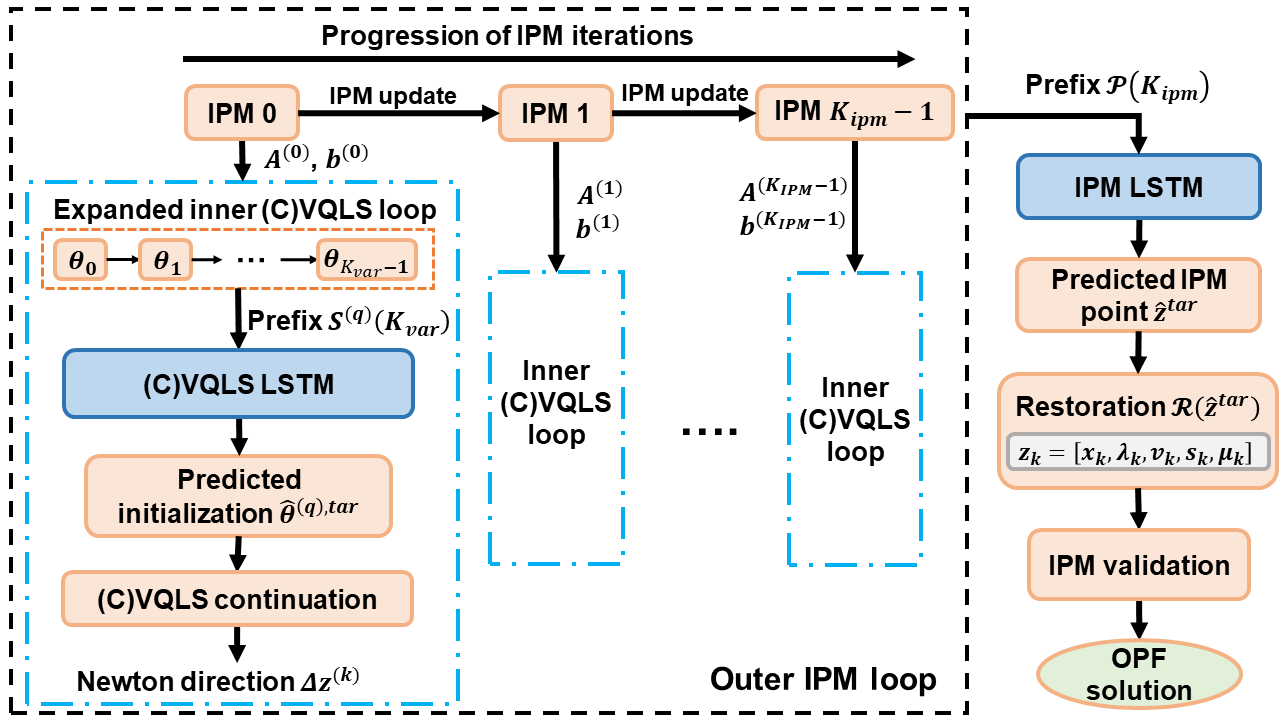}
    \caption{Nested-loop trajectory-informed Q-IPM framework.}
    \label{fig:proposed_nested_framework}
\end{figure}
\section{Case Study}

\subsection{Simulation Setting}

The proposed framework is evaluated on 2-bus, 3-bus, 5-bus, and 12-bus OPF systems. The 2-bus case is implemented on real quantum hardware through the IBMQ platform. The 3-bus and 5-bus cases are simulated using VQLS in PennyLane with Python \cite{CoherentVar}. For the 12-bus case, CVQLS is used because the embedded Newton systems are larger, making CVQLS more tractable in this setting. OPF simulations are performed using MATPOWER 8.0 \cite{zimmerman2016matpower}.

For the 3-bus and 5-bus systems, 1000 loading scenarios are generated by perturbing the nominal load within $\pm25\%$. For the 12-bus system, 2500 scenarios are used, and for the 2-bus hardware case, 100 scenarios are used. The scenarios include normal and stressed operating conditions, including cases with congestion and active operating limits. For each scenario, the OPF problem is solved using IPM, and the Newton systems along the IPM trajectory are recorded.

\subsection{Trainable \text{(C)}VQLS Implementation}
\label{subsec:lstm_impl_vqls}

The recorded Newton systems are solved using the corresponding baseline quantum linear solver. VQLS is used for the smaller systems, while CVQLS is used for the 12-bus system. The resulting variational-parameter sequences are used to train separate models. For each sequence, the first $K_{\mathrm{var}}$ variational iterates are used as the input, and the best-cost parameter vector along the baseline trajectory is used as the target. Angle-valued parameters are represented using sine and cosine components to avoid wrap-around discontinuities. The models are trained using a train/validation/test split with early stopping based on validation loss.
\subsection{Validation of Prefix-Length Selection}
\label{subsec:prefix_validation}

The prefix length $K_{\mathrm{var}}$ determines how many early \text{(C)}VQLS updates are observed before the trainable parameter prediction is applied. A very short prefix may not contain enough information, while a long prefix reduces the benefit of reducing the remaining variational search. Therefore, $K_{\mathrm{var}}$ is selected using the validation criterion in Section~\ref{sec:learning_vqls}.

The validation uses trajectories from the 2-bus, 3-bus, 5-bus, and 12-bus cases. The candidate set is
{\small
\begin{equation}
\mathcal{K}_{\mathrm{var}}=\{2,5,7,11,13\},
\end{equation}}
with $K_{\mathrm{var}}^{\max}=13$. For each candidate, the model is evaluated using the parameter-prediction error, final \text{(C)}VQLS cost, remaining variational updates, and prefix length. These quantities are normalized before forming the validation score, with equal weights.

\begin{figure}[!t]
    \centering
    \subfloat[Aggregate validation score]{%
        \includegraphics[width=0.44\columnwidth]{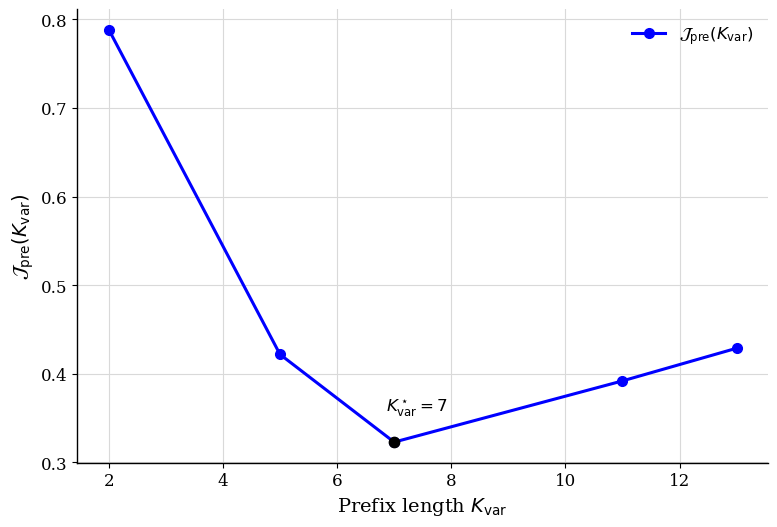}}
    \hfill
    \subfloat[Normalized score components]{%
        \includegraphics[width=0.44\columnwidth]{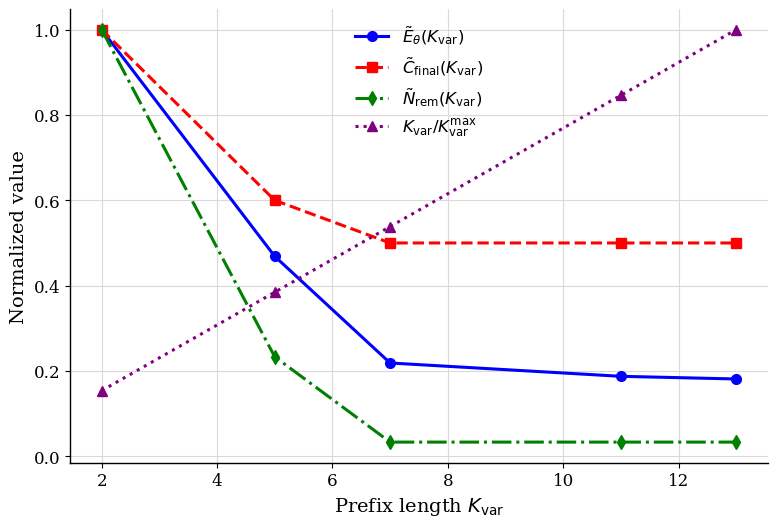}}
    \caption{Aggregate prefix-length validation across the 2-bus, 3-bus, 5-bus, and 12-bus systems.}
    \label{fig:prefix_validation}
\end{figure}

Fig.~\ref{fig:prefix_validation}(a) shows the aggregate normalized validation score. The score decreases sharply up to $K_{\mathrm{var}}=7$, indicating that the early \text{(C)}VQLS updates provide useful information for parameter prediction. Beyond this point, the improvement is marginal while the prefix-length penalty increases; hence, \(K_{\mathrm{var}}^\star=7\) is selected. Fig.~\ref{fig:prefix_validation}(b) shows the normalized score components. As $K_{\mathrm{var}}$ increases, the prediction error and remaining variational updates decrease, while the prefix-length term increases. Therefore, \(K_{\mathrm{var}}^\star=7\) provides a practical balance between prediction quality and computational effort and is used in the reported experiments.

\subsection{Real Quantum Hardware Demonstration}
\label{subsec:real_hw_demo}

The 2-bus case is used as a proof-of-concept hardware demonstration. The IPM-generated Newton systems are solved using VQLS on the IBMQ platform, and the measured variational trajectories are used to evaluate the trainable parameter initialization under hardware noise.

\begin{figure}[!t]
    \centering
    \subfloat[VQLS and L-VQLS cost on IBMQ for a representative Newton system]{%
        \includegraphics[width=0.44\columnwidth]{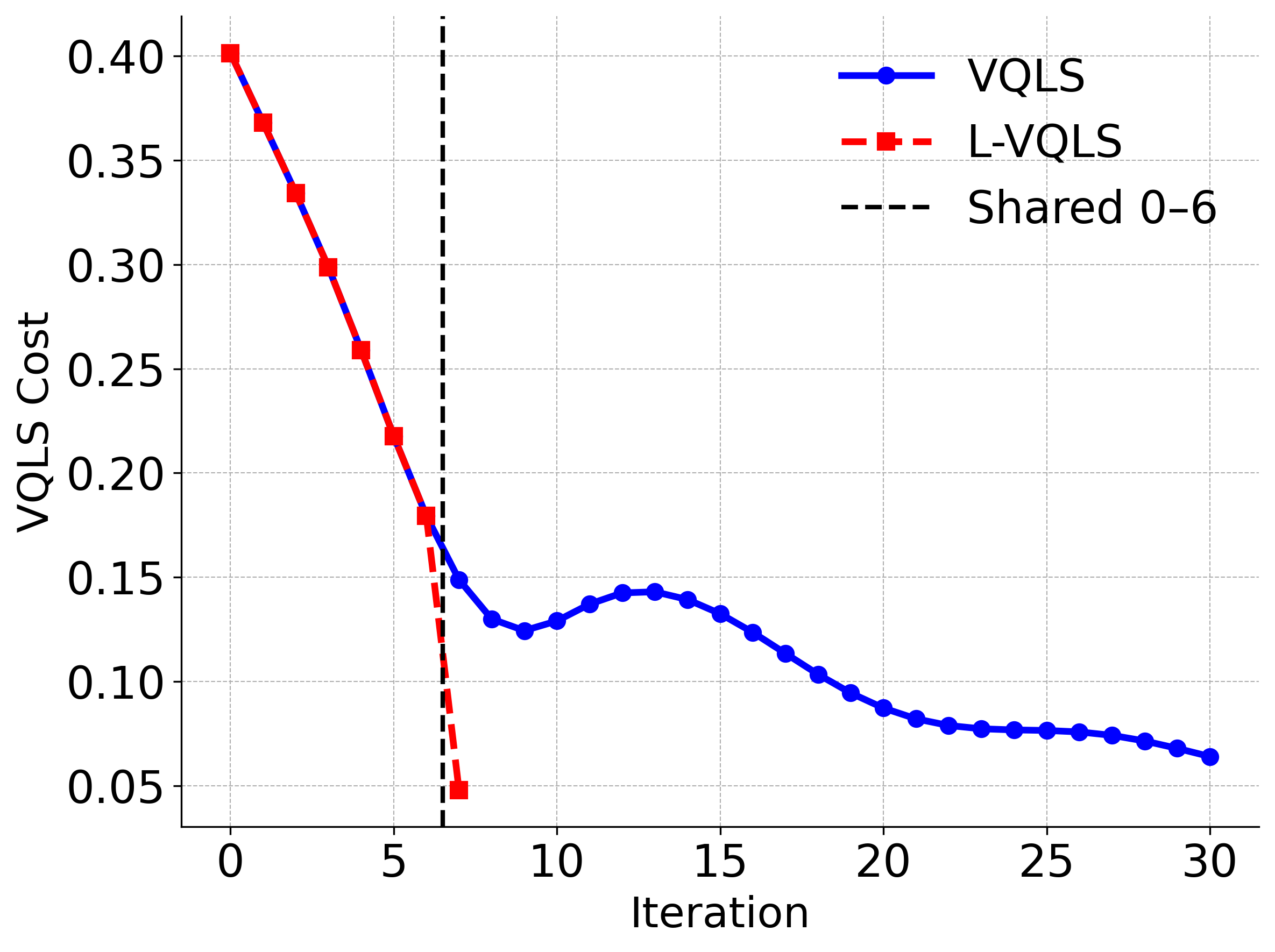}}
    \hfill
    \subfloat[OPF objective-value trajectory for classical IPM, plain QIPM, and proposed QIPM]{%
        \includegraphics[width=0.44\columnwidth]{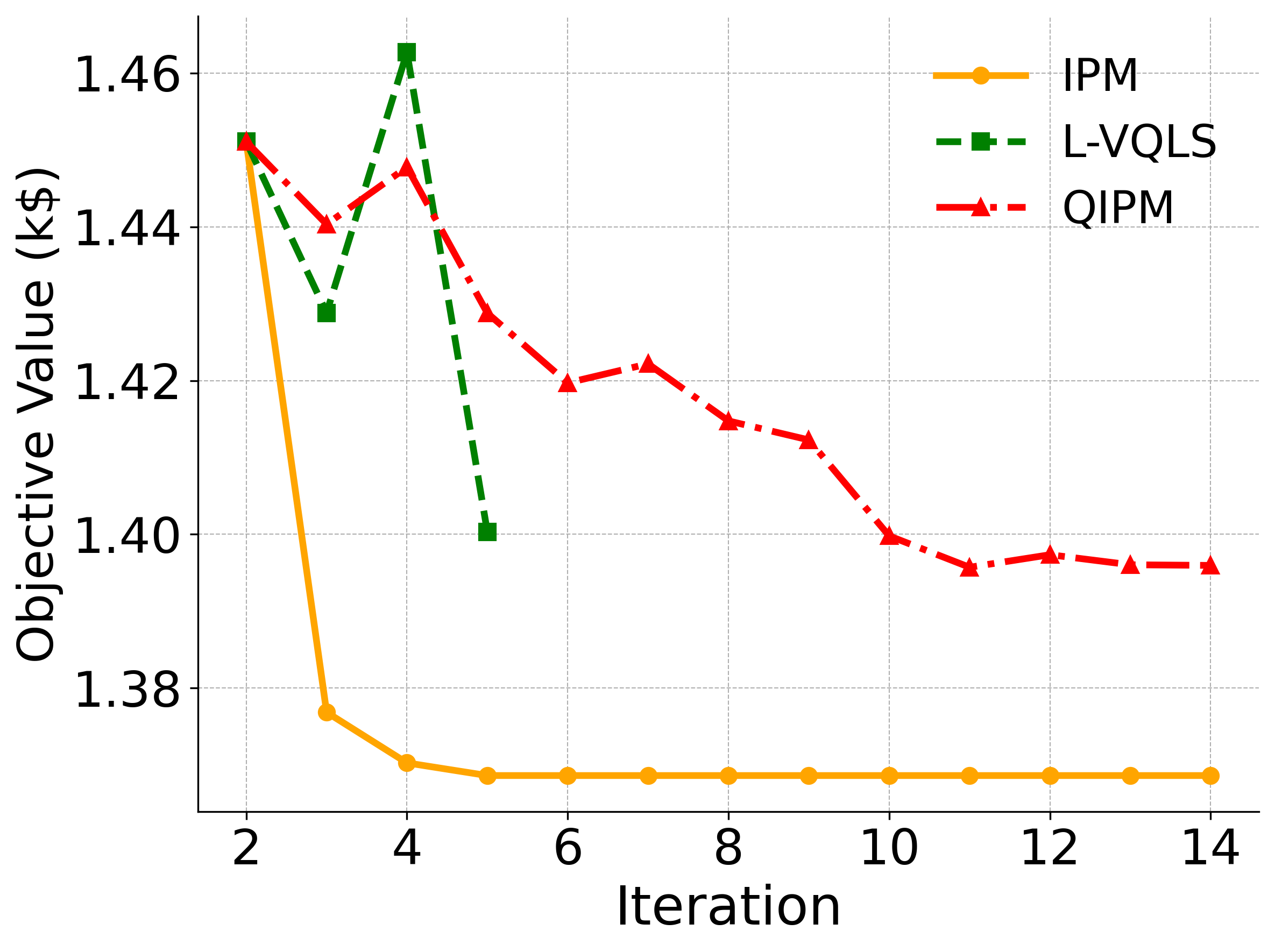}}
    \caption{Results of the 2-bus real quantum hardware study.}
    \label{fig:2bus_hardware_results}
\end{figure}

Fig.~\ref{fig:2bus_hardware_results}(a) shows a representative hardware VQLS run. The baseline and proposed methods share the same initial prefix. After the predicted parameter vector is applied, L-VQLS moves directly to a lower-cost region, while the baseline VQLS requires additional hardware-based updates.

Fig.~\ref{fig:2bus_hardware_results}(b) shows the corresponding OPF objective trajectories. The classical IPM reaches the reference objective value of approximately $1368.6$. The plain hardware-based QIPM reaches about $1395.95$ after 13 iterations, while the proposed version reaches about $1400.30$ by iteration 5. The small final difference is mainly attributed to the approximate Newton directions produced by the hardware-based VQLS solve.

\subsection{VQLS Results}
\label{subsec:learning_vqls_results}

The 3-bus and 5-bus cases evaluate the proposed trainable VQLS parameter initialization on IPM-generated linear systems with embedded dimensions up to $32\times32$. These systems remain computationally feasible for VQLS simulation while still showing the effect of repeated variational optimization inside IPM.

\begin{figure}[!t]
    \centering
    \subfloat[3-bus system, $16\times16$ matrix]{%
        \includegraphics[width=0.44\columnwidth]{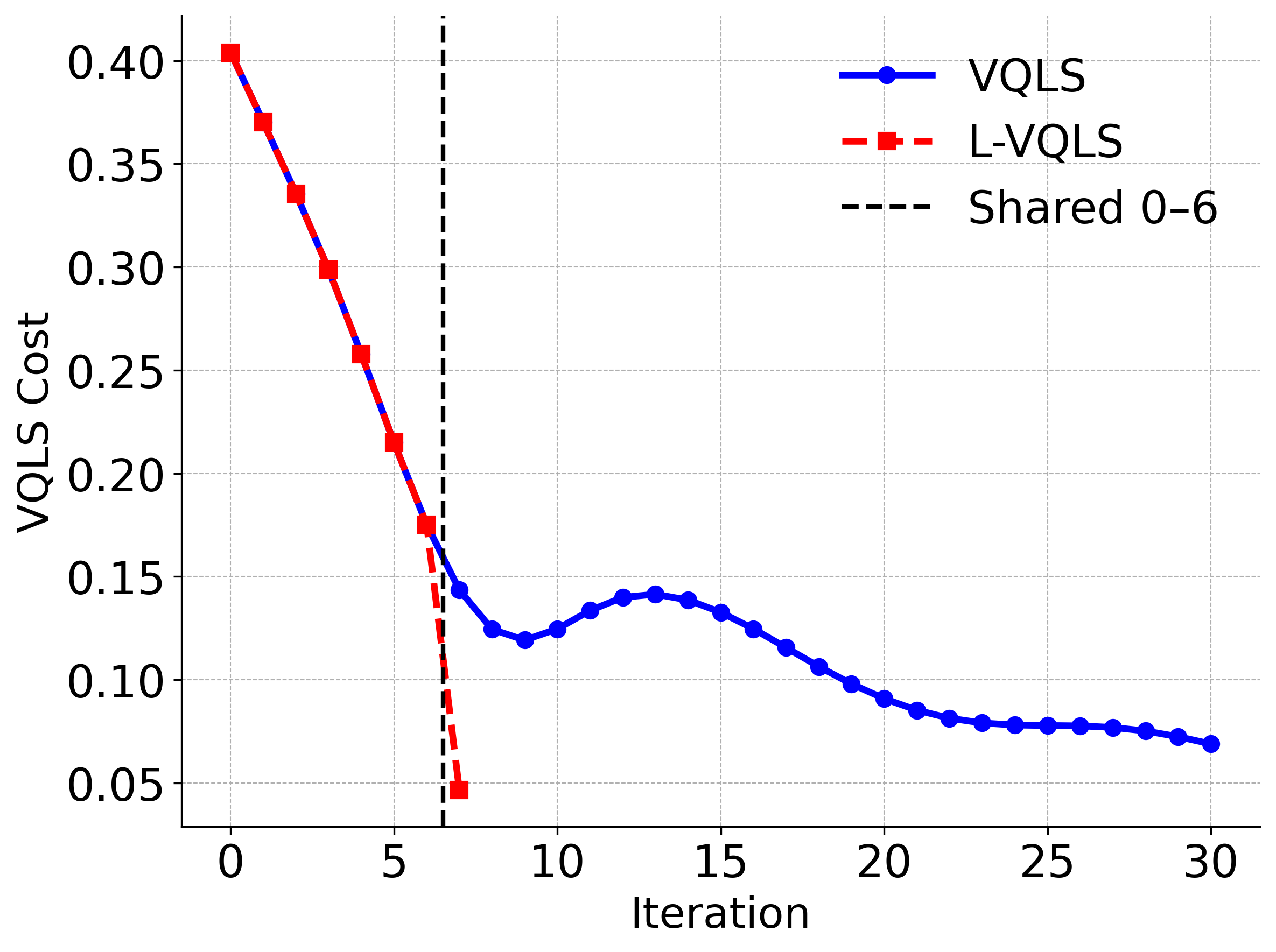}}
    \hfill
    \subfloat[5-bus system, $32\times32$ matrix]{%
        \includegraphics[width=0.44\columnwidth]{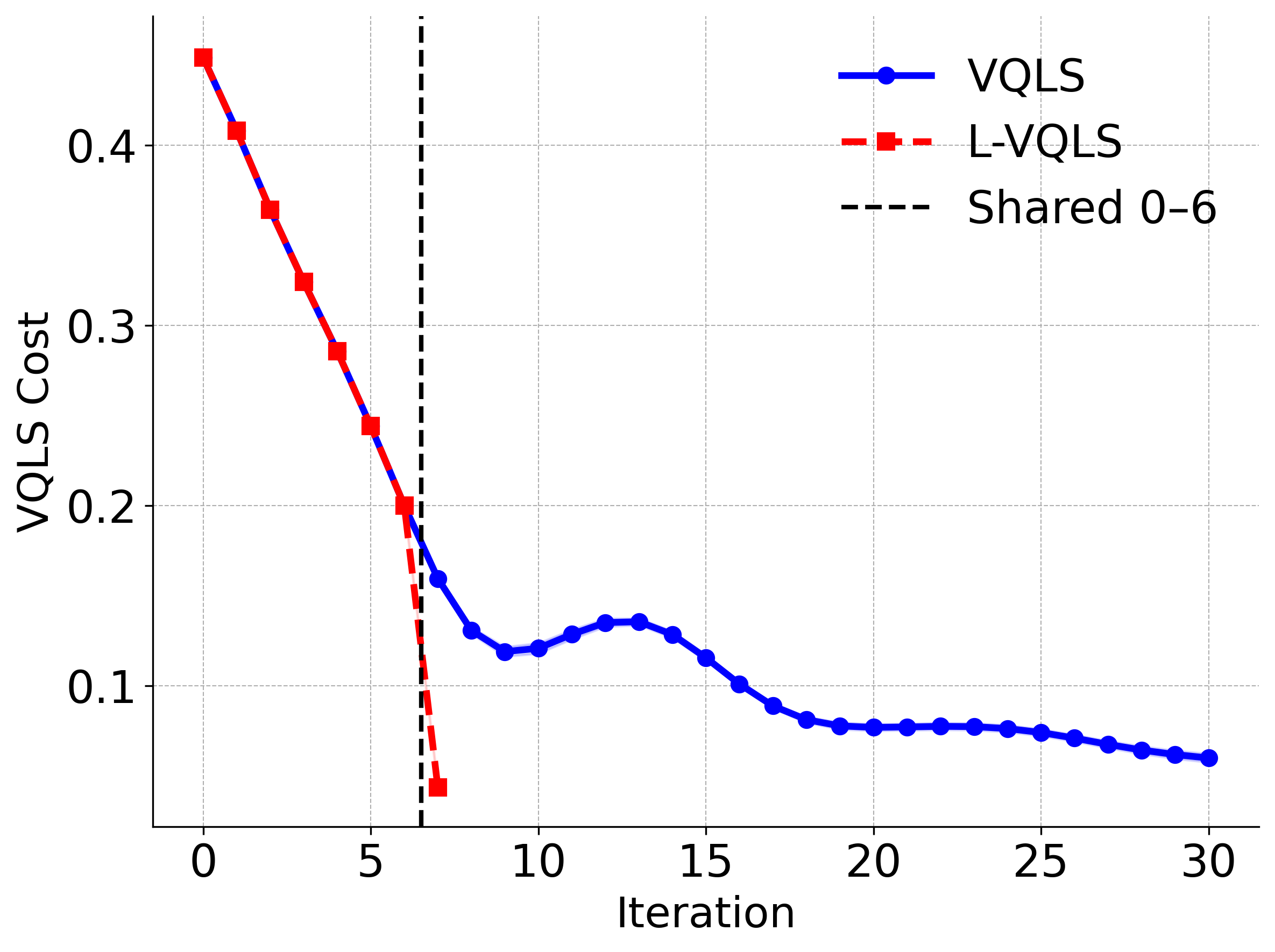}}
    \caption{Comparison of VQLS and L-VQLS cost trajectories for the 3-bus and 5-bus systems.}
    \label{fig:lvqls_cost_comparison}
\end{figure}

Fig.~\ref{fig:lvqls_cost_comparison} compares the VQLS and L-VQLS cost trajectories for representative IPM-generated systems. In both cases, the two curves share the same initial prefix. After the predicted parameter vector is applied, L-VQLS moves to a lower-cost region with fewer additional updates, while the baseline VQLS continues to decrease gradually. This indicates that the trainable model improves the remaining variational search rather than replacing the quantum linear solver.

To assess whether the reduced inner-loop effort preserves OPF-level behavior, the predicted VQLS parameters are embedded in the outer IPM loop.

\begin{figure}[!t]
    \centering
    \subfloat[3-bus system]{%
        \includegraphics[width=0.44\columnwidth]{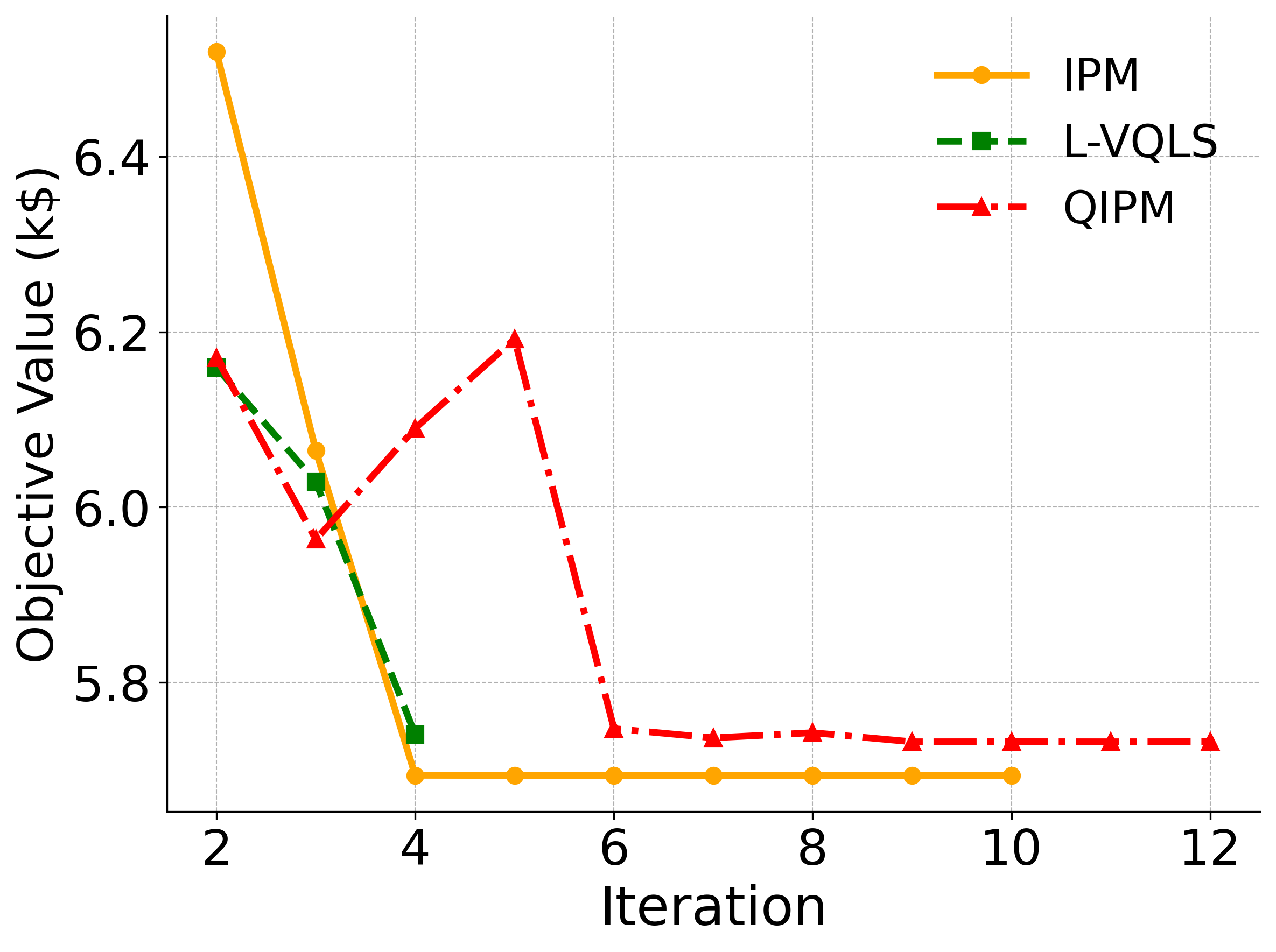}}
    \hfill
    \subfloat[5-bus system]{%
        \includegraphics[width=0.44\columnwidth]{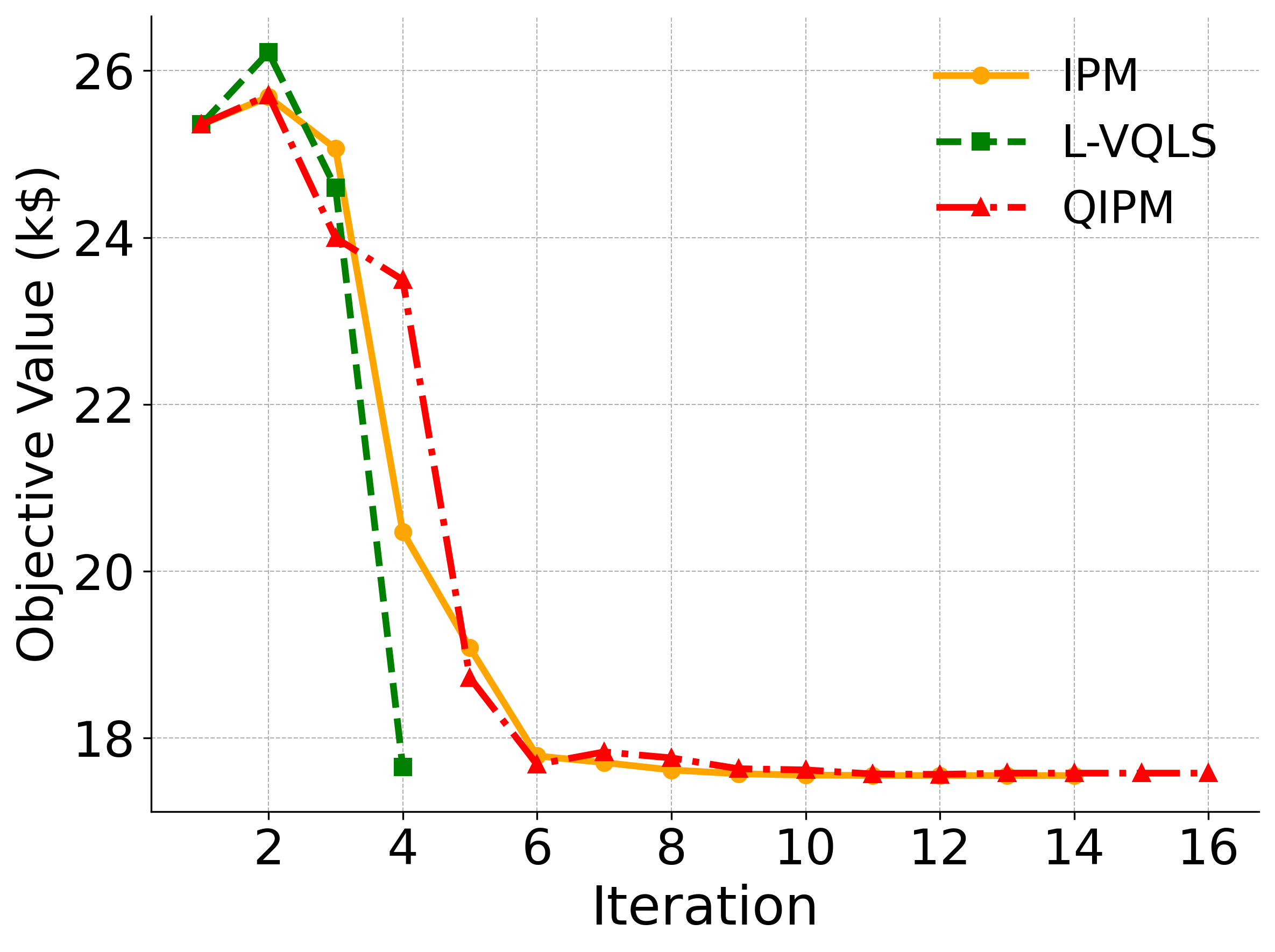}}
    \caption{OPF objective-value trajectories obtained using classical IPM, baseline QIPM, and proposed QIPM for the 3-bus and 5-bus systems.}
    \label{fig:lvqls_opf_comparison}
\end{figure}

Fig.~\ref{fig:lvqls_opf_comparison} shows that the proposed QIPM follows the general convergence behavior of the OPF solution while reaching the low-objective region in fewer outer iterations than the baseline QIPM. The classical IPM provides the reference trajectory, while the QIPM curves show the effect of VQLS-based approximate Newton directions.

\subsection{CVQLS Results}
\label{subsec:learning_cvqls_results}

The 12-bus case evaluates the proposed method on a larger IPM-generated system. The Newton-direction system has size $57\times57$, is padded to $ 64\times64$, and is converted to Hermitian form before applying the quantum routine. CVQLS is used in this case because it is more tractable for the embedded system.

The baseline CVQLS runs provide the reference variational-parameter trajectories. The proposed L-CVQLS model uses the first $K_{\mathrm{var}}^\star=7$ variational iterations as input and predicts the best-cost parameter vector observed along the baseline trajectory. This vector is then used to continue the remaining CVQLS optimization.

\begin{figure}[!t]
    \centering
    \subfloat[12-bus system: CVQLS and L-CVQLS cost trajectories]{%
        \includegraphics[width=0.44\columnwidth]{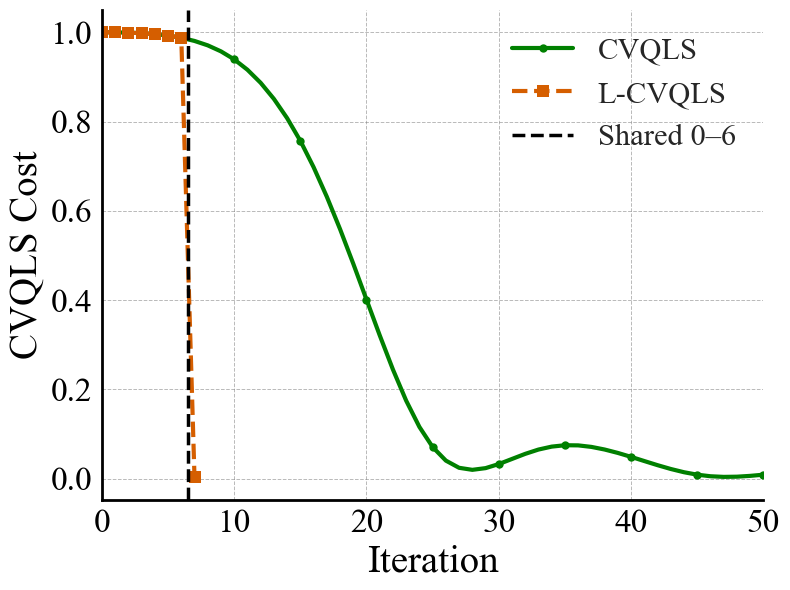}}
    \hfill
    \subfloat[12-bus system: OPF objective trajectories]{%
        \includegraphics[width=0.44\columnwidth]{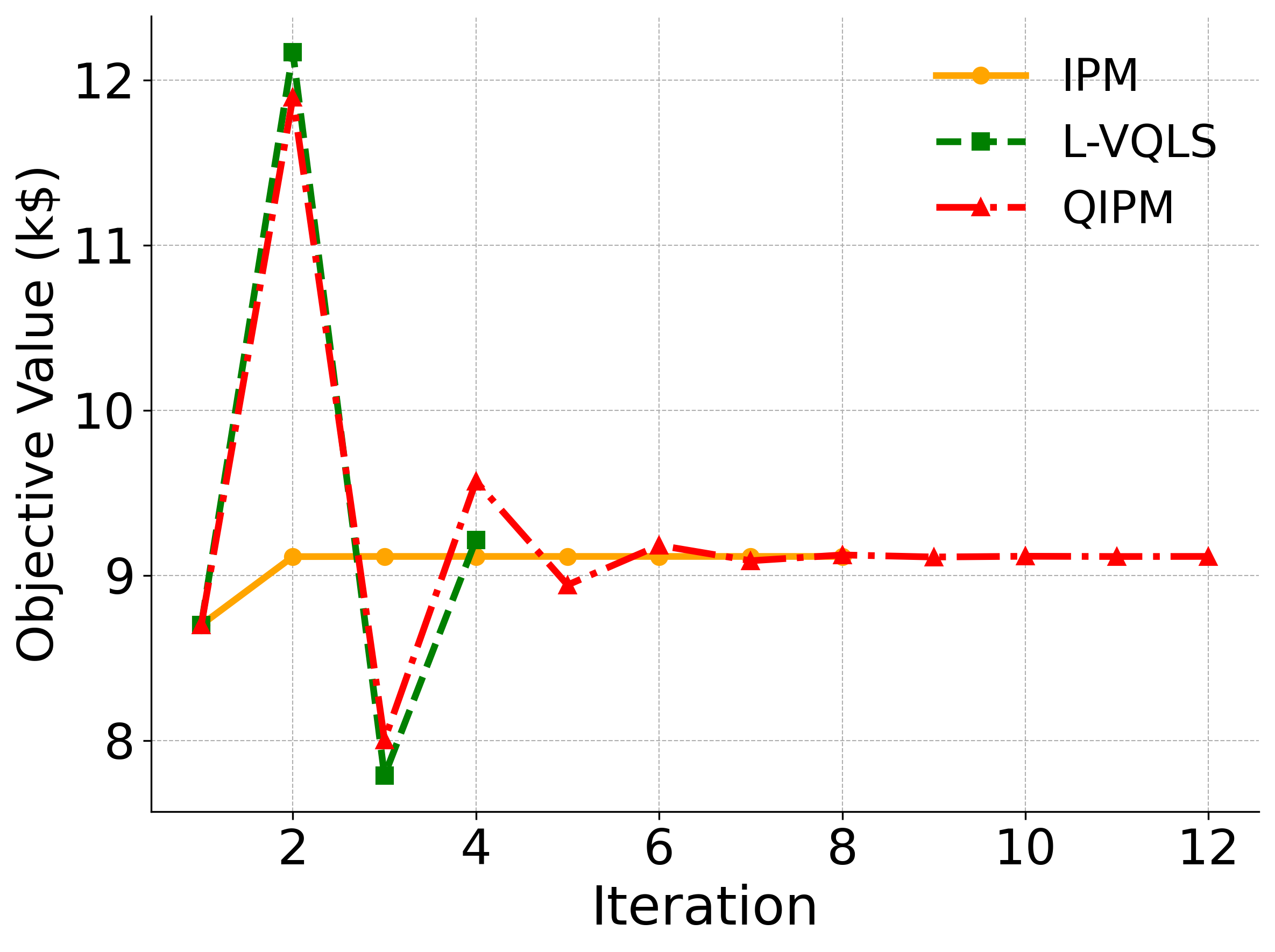}}
    \caption{Performance of the proposed L-CVQLS framework on the 12-bus system.}
    \label{fig:lcvqls_results}
\end{figure}

Fig.~\ref{fig:lcvqls_results}(a) shows that L-CVQLS moves to a low-cost region after the predicted parameter vector is applied, while the baseline CVQLS requires many additional updates. This behavior is consistent with the smaller VQLS cases.

Fig.~\ref{fig:lcvqls_results}(b) shows the corresponding OPF objective trajectories. The quantum-assisted curves are less smooth than the classical IPM reference because the CVQLS solution is more approximate for the larger embedded system. Nevertheless, both quantum-assisted methods move toward the same objective region, and the proposed version tracks the baseline QIPM behavior while reducing the inner variational effort.

\subsection{IPM-Level Central Path Projection}
\label{subsec:ipm_projection_results}

Following the validation-based selection in Section~\ref{subsec:ipm_prefix_selection}, the IPM-level model $g_{\psi}^{\mathrm{ipm}}(\cdot)$ uses the first $K_{\mathrm{ipm}}=3$ IPM iterates as the input prefix. The model predicts a later primal-dual point, which is restored using $\mathcal{R}(\cdot)$ before IPM continues. One subsequent IPM iteration is used for correction and verification.

The prediction accuracy is summarized in Table~\ref{tab:ipm_lstm_accuracy}. The high $R^2$ values and low RMSE values indicate that the model captures the central path evolution across the considered systems.

\begin{table}[!t]
\centering
\caption{Prediction accuracy of the IPM-level model.}
\label{tab:ipm_lstm_accuracy}
\scriptsize
\begin{tabular}{lcccc}
\hline
Metric & 2-bus & 3-bus & 5-bus & 12-bus \\
\hline
$R^2$ & 0.994 & 0.986 & 0.972 & 0.941 \\
RMSE  & 0.045 & 0.077 & 0.105 & 0.155 \\
\hline
\end{tabular}
\end{table}

To evaluate the OPF-level effect of the projection, the final objective value obtained by LIPM is compared with that of the classical IPM. The relative objective error is
{\small
\begin{equation}
\epsilon_f^{\mathrm{LIPM}}
=
\frac{
\left| f_{\mathrm{LIPM}} - f_{\mathrm{IPM}} \right|
}{
\left| f_{\mathrm{IPM}} \right|
}
\times 100\%,
\label{eq:lipm_objective_error}
\end{equation}}
where $f_{\mathrm{IPM}}$ is the final objective from the full classical IPM trajectory, and $f_{\mathrm{LIPM}}$ is the final objective after applying the central path projection and continuing IPM correction.

\begin{table}[!t]
\centering
\caption{Effect of IPM-level central path projection.}
\label{tab:ipm_projection_effect}
\scriptsize
\begin{tabular}{lcccc}
\hline
System & IPM iter. & LIPM iter. & Iter. red. (\%) & $\epsilon_f^{\mathrm{LIPM}}$ (\%) \\
\hline
2-bus  & 10 & 4 & 60.0 & 0.63 \\
3-bus  & 15 & 4 & 73.3 & 0.89 \\
5-bus  & 16 & 4 & 75.0 & 0.66 \\
12-bus & 10 & 4 & 60.0 & 0.85 \\
\hline
\end{tabular}
\end{table}

\begin{table}[!t]
\centering
\caption{IPM projection restoration.}
\label{tab:ipm_restoration_behavior}
\scriptsize
\setlength{\tabcolsep}{4pt}
\begin{tabular}{cccc}
\hline
Unseen cases & Nontriv. rest. & Rate & Corr. iter. \\
\hline
1000 & 11 & 1.0\% & 1 \\
\hline
\end{tabular}
\end{table}

Table~\ref{tab:ipm_projection_effect} shows that the central path projection reduces the number of outer IPM iterations to four across all systems. The resulting objective errors remain below $1\%$ with respect to the classical IPM reference. Table~\ref{tab:ipm_restoration_behavior} shows that only a small fraction of unseen cases require nontrivial restoration, and one subsequent IPM correction iteration is sufficient in those cases. These results show that the projection reduces outer-loop effort while keeping IPM responsible for correction and feasibility enforcement.

\subsection{Prediction Accuracy and Effort Reduction}
\label{subsec:prediction_effort_accuracy}

The final set of results quantifies the parameter-prediction accuracy, OPF objective error, and reduction in variational updates. The prediction error is reported for the 5-bus VQLS case and the 12-bus CVQLS case, which represent the simulated VQLS and the larger CVQLS settings.

\begin{figure}[!t]
    \centering
    \subfloat[5-bus mean]{%
        \includegraphics[width=0.47\columnwidth]{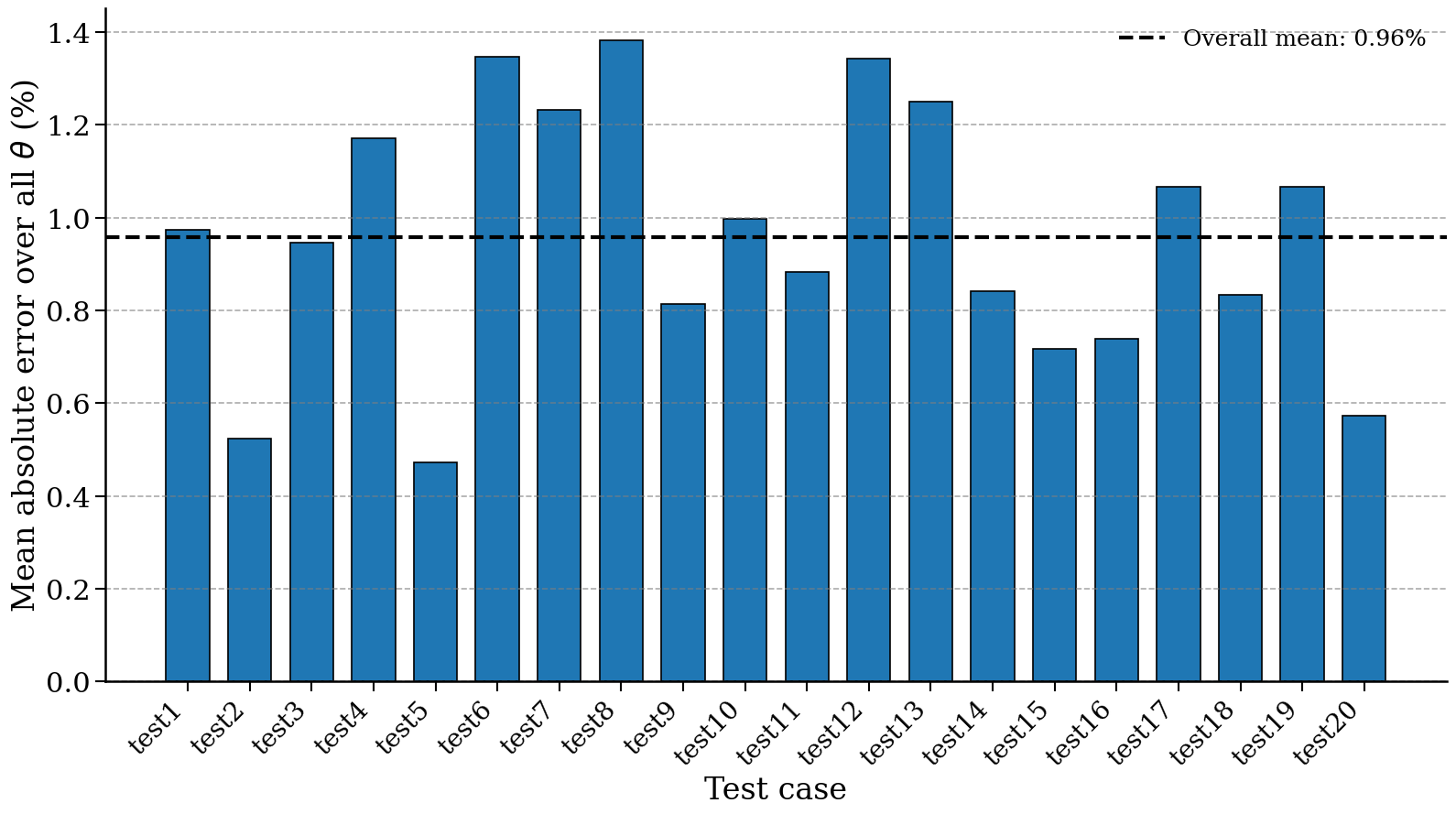}}
    \hfill
    \subfloat[5-bus selected]{%
        \includegraphics[width=0.47\columnwidth]{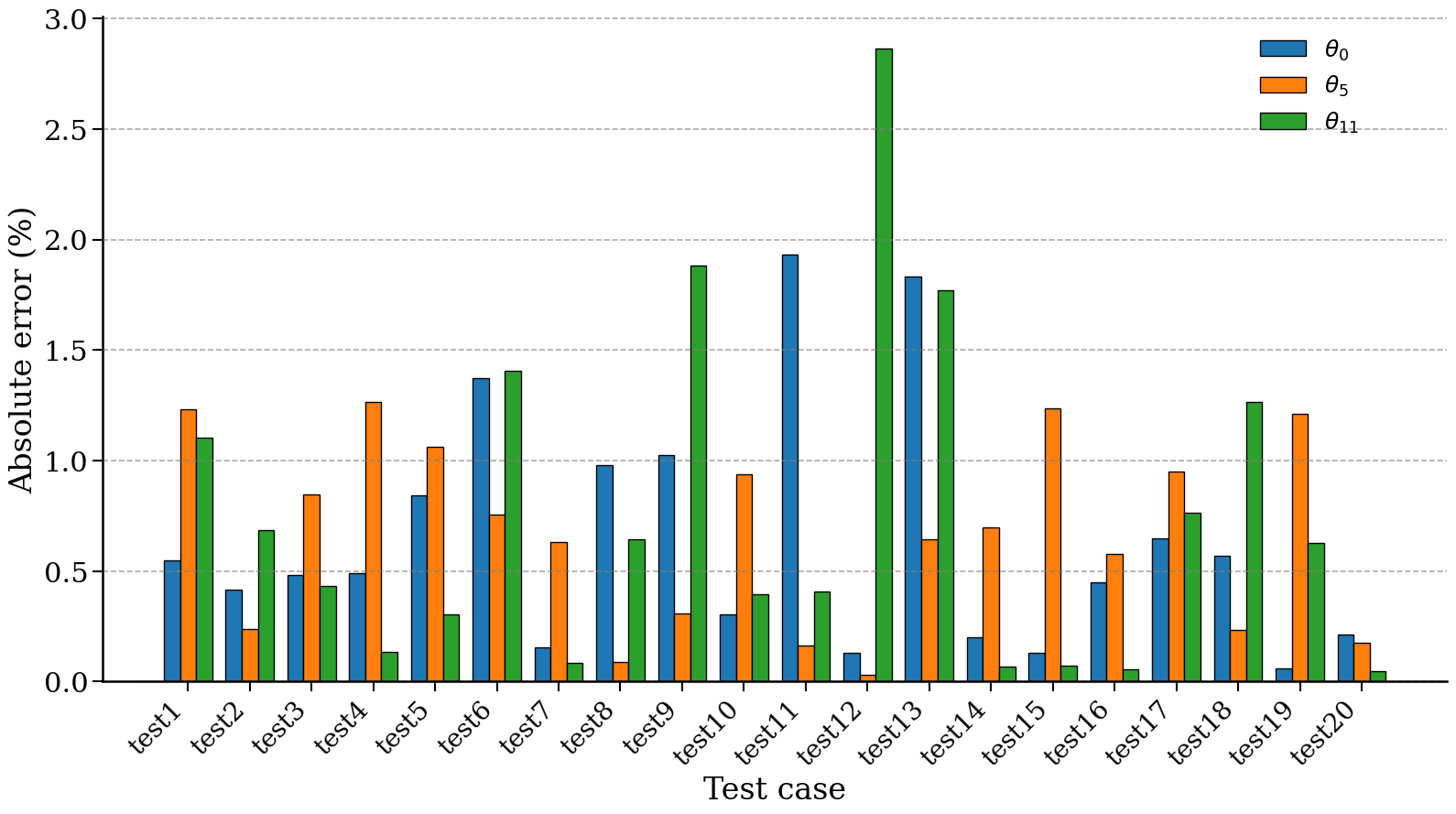}}
    \\[-1mm]
    \subfloat[12-bus mean]{%
        \includegraphics[width=0.47\columnwidth]{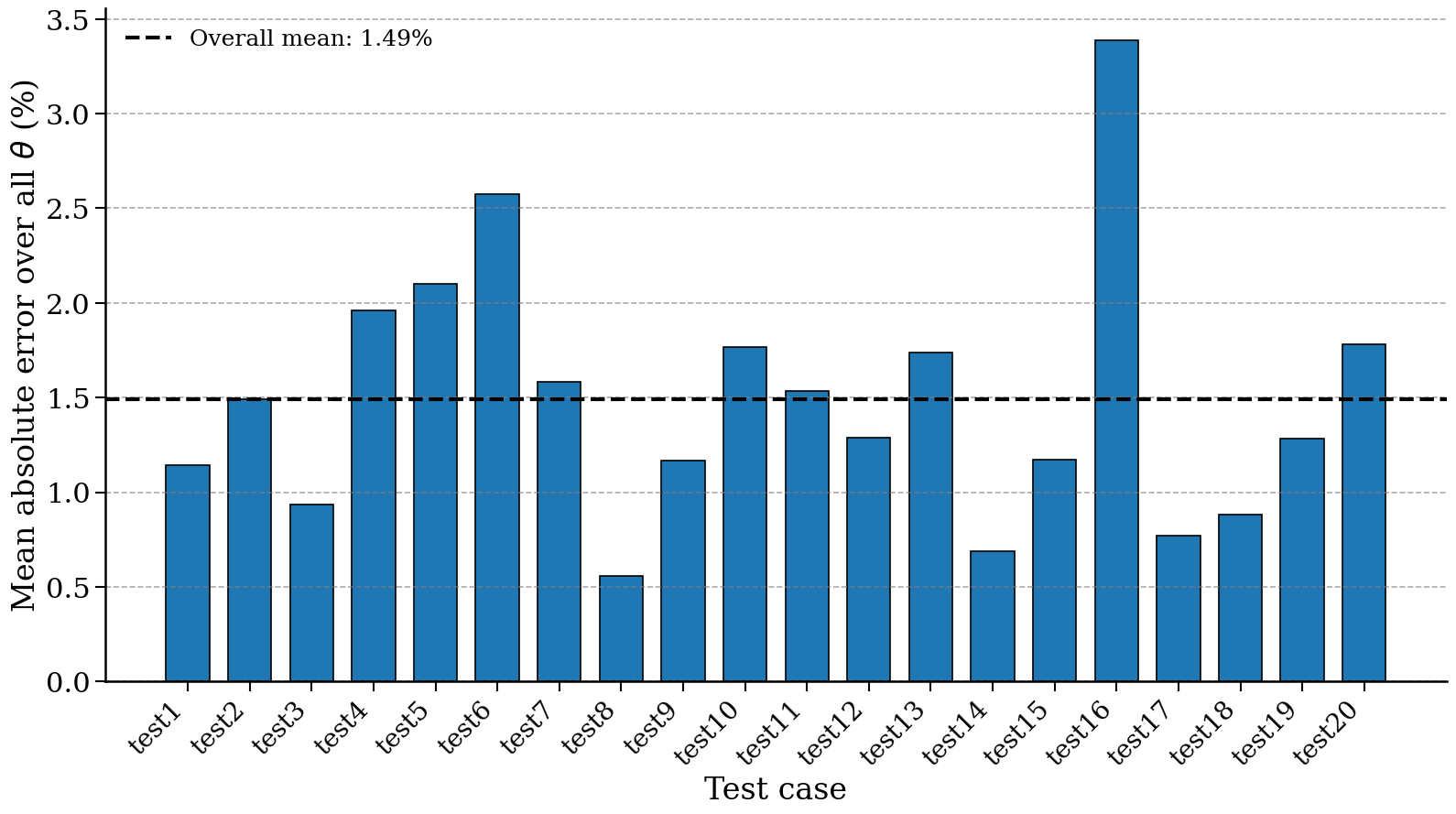}}
    \hfill
    \subfloat[12-bus selected]{%
        \includegraphics[width=0.47\columnwidth]{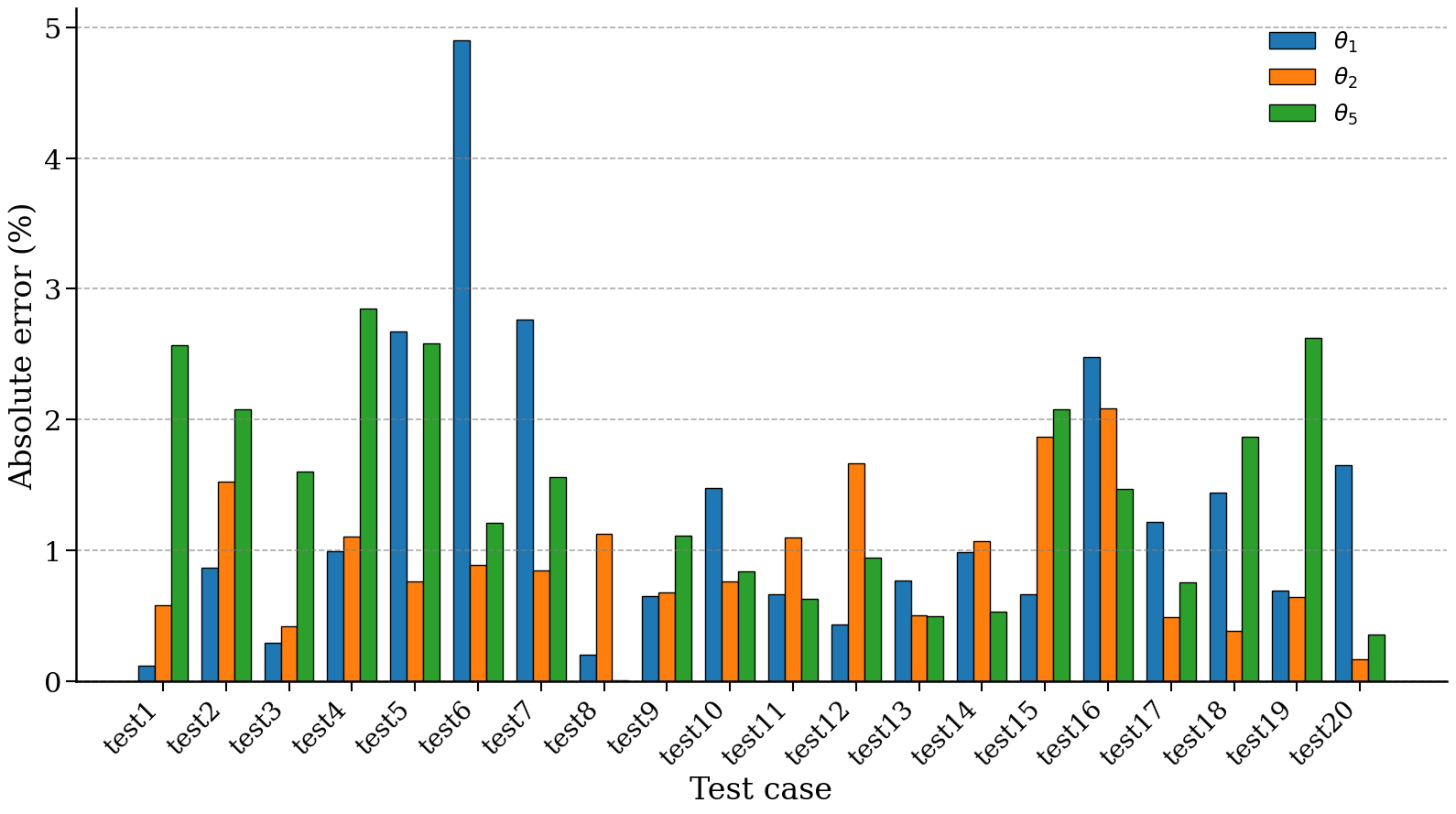}}
    \caption{Prediction error of the estimated variational parameters for the 5-bus VQLS and 12-bus CVQLS cases, averaged over 100 unseen test instances.}
    \label{fig:theta_prediction_error}
\end{figure}

Fig.~\ref{fig:theta_prediction_error} shows that the predicted parameters remain close to the baseline best-cost targets. The mean absolute prediction error is approximately $0.96\%$ for the 5-bus VQLS case and $1.49\%$ for the 12-bus CVQLS case. The larger error in the 12-bus case is expected because the embedded Newton system is larger and the CVQLS optimization landscape is more difficult.

The final OPF objective error is computed relative to the classical IPM reference:
{\small
\begin{equation}
\epsilon_f =
\frac{
\left| f_{\mathrm{method}} - f_{\mathrm{IPM}} \right|
}{
\left| f_{\mathrm{IPM}} \right|
}
\times 100\%,
\label{eq:opf_objective_error}
\end{equation}}
where $f_{\mathrm{IPM}}$ is the final objective value obtained by classical IPM, and $f_{\mathrm{method}}$ is the final objective value obtained by either baseline QIPM or the proposed QIPM.

\begin{table}[!t]
\centering
\caption{Relative OPF objective error w.r.t. classical IPM.}
\label{tab:opf_objective_error}
\begin{tabular}{lcc}
\hline
Test system & Proposed QIPM (\%) & Baseline QIPM (\%) \\
\hline
2-bus  & 2.32 & 2.00 \\
3-bus  & 0.82 & 0.68 \\
5-bus  & 0.58 & 0.16 \\
12-bus & 1.11 & 0.00 \\
\hline
\end{tabular}
\end{table}

Table~\ref{tab:opf_objective_error} shows that the proposed QIPM remains close to the classical IPM reference. The baseline QIPM yields a smaller final objective error in some cases because it continues the full variational optimization, whereas the proposed method reduces the repeated quantum-solver effort.

The reduction in variational effort is summarized in Table~\ref{tab:var_effort_reduction}. Let $N_{\mathrm{base}}^{\max}$ denote the maximum number of baseline \text{(C)}VQLS updates over the test set, and let $N_{\mathrm{learn}}$ denote the number of updates used by the proposed solver. The reduction is
{\small
\begin{equation}
R_{\mathrm{var}} =
\frac{
N_{\mathrm{base}}^{\max} - N_{\mathrm{learn}}
}{
N_{\mathrm{base}}^{\max}
}
\times 100\%.
\label{eq:var_effort_reduction}
\end{equation}}

\begin{table}[!t]
\centering
\caption{Reduction in variational optimization effort.}
\label{tab:var_effort_reduction}
\begin{tabular}{lccc}
\hline
Test system & Max baseline updates & Proposed updates & Reduction (\%) \\
\hline
2-bus  & 30  & 8 & 73.3 \\
3-bus  & 73  & 8 & 89.0 \\
5-bus  & 96  & 8 & 91.7 \\
12-bus & 165 & 8 & 95.2 \\
\hline
\end{tabular}
\end{table}

Table~\ref{tab:var_effort_reduction} shows that the proposed parameter initialization reduces the number of variational updates required by the quantum linear solver. The reduction increases with system size, reaching $95.2\%$ for the 12-bus CVQLS case. Since \text{(C)}VQLS is invoked repeatedly inside IPM, this reduction directly lowers the inner-loop computational burden of the quantum-assisted OPF framework.
\section{Conclusion}
\label{sec:conclusion}

This paper presented a nested-loop trajectory-informed quantum interior-point framework for OPF. The main motivation is that embedding \text{(C)}VQLS within IPM incurs two coupled computational costs: the number of outer IPM iterations and the number of inner variational updates per Newton solve. The proposed framework reduces both costs by using early solver-generated trajectories rather than continuing each loop without exploiting the information already available.

At the inner level, the first few \text{(C)}VQLS parameter updates provide a compact signature of the variational search direction for the current IPM-generated linear system. A trajectory-conditioned temporal learner uses this early parameter prefix to predict an informed initialization, allowing the remaining variational optimization to continue from a lower-cost region. At the outer level, early IPM iterations contain important information about the evolution of the central path, including the movement of primal variables, dual multipliers, slack variables, and the barrier parameter. A second temporal sequence model uses this prefix to project a later central path state. The projected state is restored to an admissible primal-dual point before IPM continues, so feasibility, complementarity, and optimality remain enforced by the subsequent IPM correction steps.

The framework was evaluated on 2-bus, 3-bus, 5-bus, and 12-bus OPF systems, including an IBMQ hardware demonstration for the 2-bus case. The proposed \text{(C)}VQLS-level projection reduced the number of variational updates by up to $95.2\%$, with final OPF objective deviations below $2.4\%$ relative to the classical IPM reference. The IPM-level projection reduced the outer iteration count to four across all test systems, with objective errors below $0.9\%$. Only $1\%$ of unseen cases required nontrivial restoration, indicating that the restoration step mainly acts as a safeguard.

These results show that early variational and IPM trajectories contain sufficient temporal information to guide the remaining computation. By conditioning sequence-learning models on these early dynamics, the proposed method reduces the need for repeated quantum and classical optimization while preserving the corrective role of the physics-based IPM framework. Future work will extend the method to larger networks, study robustness under hardware noise, and develop adaptive prefix-length and stopping strategies.

\bibliographystyle{ieeetr}
\bibliography{References}

@article{morstyn2022annealing,
  title={Annealing-based quantum computing for combinatorial optimal power flow},
  author={Morstyn, Thomas},
  journal={IEEE Transactions on Smart Grid},
  volume={14},
  number={2},
  pages={1093--1102},
  year={2022},
  publisher={IEEE}
}

@article{amani2025quantum,
  title={Quantum Optimization for Optimal Power Flow: CVQLS-Augmented Interior Point Method},
  author={Amani, Farshad and Kargarian, Amin},
  journal={IEEE Transactions on Smart Grid},
  year={2025},
  publisher={IEEE}
}

@article{pan2020deepopf,
  title={Deepopf: A deep neural network approach for security-constrained dc optimal power flow},
  author={Pan, Xiang and Zhao, Tianyu and Chen, Minghua and Zhang, Shengyu},
  journal={IEEE Transactions on Power Systems},
  volume={36},
  number={3},
  pages={1725--1735},
  year={2020},
  publisher={IEEE}
}

@article{amani2025optimal,
  title={Optimal power flow solution via noise-resilient quantum interior-point methods},
  author={Amani, Farshad and Kargarian, Amin},
  journal={Electric Power Systems Research},
  volume={240},
  pages={111216},
  year={2025},
  publisher={Elsevier}
}

@article{rajabi2026distributed,
  title={A Distributed Quantum Approximate Optimization Algorithm Simulator for Engineering Design Optimization},
  author={Rajabi, Ali and Hasanzadeh, Milad and Kargarian, Amin},
  journal={arXiv preprint arXiv:2606.26297},
  year={2026}
}

@inproceedings{hasanzadeh2026distributed,
  title={A Distributed Variational Quantum Eigensolver Algorithm for Unit Commitment},
  author={Hasanzadeh, Milad and Rajabi, Ali and Kargarian, Amin},
  booktitle={2026 IEEE Texas Power and Energy Conference (TPEC)},
  pages={1--6},
  year={2026},
  organization={IEEE}
}

@article{ardali2026deep,
  title={A Deep Reinforcement Learning (DRL)-Based Transformer Method for Solving the Open Shop Scheduling Problem},
  author={Ardali, Faezeh and Nyelele, Mwembezi A and Knapp, Gerald M},
  journal={arXiv preprint arXiv:2606.13682},
  year={2026}
}

@inproceedings{amani2024quantum,
  title={Quantum-inspired optimal power flow},
  author={Amani, Farshad and Kargarian, Amin},
  booktitle={2024 IEEE Texas Power and Energy Conference (TPEC)},
  pages={1--6},
  year={2024},
  organization={IEEE}
}

@article{liu2024quantum,
  title={Quantum power flows: From theory to practice},
  author={Liu, Junyu and Zheng, Han and Hanada, Masanori and Setia, Kanav and Wu, Dan},
  journal={Quantum Machine Intelligence},
  volume={6},
  number={2},
  pages={55},
  year={2024},
  publisher={Springer}
}

@article{kaseb2024quantum,
  title={Quantum neural networks for power flow analysis},
  author={Kaseb, Zeynab and M{\"o}ller, Matthias and Balducci, Giorgio Tosti and Palensky, Peter and Vergara, Pedro P},
  journal={Electric Power Systems Research},
  volume={235},
  pages={110677},
  year={2024},
  publisher={Elsevier}
}

@article{amani2025learning,
  title={Learning Interior Point Method for AC and DC Optimal Power Flow},
  author={Amani, Farshad and Kargarian, Amin and Vaidyanathan, Ramachandran},
  journal={arXiv preprint arXiv:2508.19146},
  year={2025}
}

@article{morstyn2024opportunities,
  title={Opportunities for quantum computing within net-zero power system optimization},
  author={Morstyn, Thomas and Wang, Xiangyue},
  journal={Joule},
  volume={8},
  number={6},
  pages={1619--1640},
  year={2024},
  publisher={Elsevier}
}

@article{pareek2025limitations,
  title={Limitations of Fault-Tolerant Quantum Linear System Solvers for Quantum Power Flow},
  author={Pareek, Parikshit and Jayakumar, Abhijith and Coffrin, Carleton and Misra, Sidhant},
  journal={IEEE Transactions on Power Systems},
  year={2025},
  publisher={IEEE}
}

@article{puig2025variational,
  title={Variational quantum simulation: a case study for understanding warm starts},
  author={Puig, Ricard and Drudis, Marc and Thanasilp, Supanut and Holmes, Zo{\"e}},
  journal={PRX Quantum},
  volume={6},
  number={1},
  pages={010317},
  year={2025},
  publisher={APS}
}

@article{morales2024quantum,
  title={Quantum linear system solvers: A survey of algorithms and applications},
  author={Morales, Mauro ES and Pira, Lirand{\"e} and Schleich, Philipp and Koor, Kelvin and Costa, Pedro and An, Dong and Aspuru-Guzik, Al{\'a}n and Lin, Lin and Rebentrost, Patrick and Berry, Dominic W},
  journal={arXiv preprint arXiv:2411.02522},
  year={2024}
}

@article{zimmerman2016matpower,
  title={Matpower 6.0 user’s manual},
  author={Zimmerman, Ray D and Murillo-S{\'a}nchez, Carlos E},
  journal={Power Systems Engineering Research Center},
  volume={9},
  pages={65--66},
  year={2016}
}

@article{bienstock2020mathematical,
  title={Mathematical programming formulations for the alternating current optimal power flow problem},
  author={Bienstock, Dan and Escobar, Mauro and Gentile, Claudio and Liberti, Leo},
  journal={4OR},
  volume={18},
  number={3},
  pages={249--292},
  year={2020},
  publisher={Springer}
}

@article{mastroianni2023assessing,
  title={Assessing quantum computing performance for energy optimization in a prosumer community},
  author={Mastroianni, Carlo and Plastina, Francesco and Scarcello, Luigi and Settino, Jacopo and Vinci, Andrea},
  journal={IEEE Transactions on Smart Grid},
  volume={15},
  number={1},
  pages={444--456},
  year={2023},
  publisher={IEEE}
}

@misc{CoherentVar,
  title = {{Coherent Variational Quantum Linear Solver}},
  howpublished = {\url{https://pennylane.ai/qml/demos/tutorial_coherent_vqls}}
}

@article{hasan2021hybrid,
  title={Hybrid learning aided inactive constraints filtering algorithm to enhance AC OPF solution time},
  author={Hasan, Fouad and Kargarian, Amin and Mohammadi, Javad},
  journal={IEEE Transactions on Industry Applications},
  volume={57},
  number={2},
  pages={1325--1334},
  year={2021},
  publisher={IEEE}
}

@article{schafer2007recurrent,
  author  = {Sch{\"a}fer, Anton Maximilian and Zimmermann, Hans Georg},
  title   = {Recurrent neural networks are universal approximators},
  journal = {International Journal of Neural Systems},
  volume  = {17},
  number  = {4},
  pages   = {253--263},
  year    = {2007}
}

@article{dembo1982inexact,
  author  = {Dembo, Ron S. and Eisenstat, Stanley C. and Steihaug, Trond},
  title   = {Inexact {N}ewton methods},
  journal = {SIAM Journal on Numerical Analysis},
  volume  = {19},
  number  = {2},
  pages   = {400--408},
  year    = {1982}
}

@book{wright1997primal,
  author    = {Wright, Stephen J.},
  title     = {Primal-Dual Interior-Point Methods},
  publisher = {SIAM},
  address   = {Philadelphia, PA},
  year      = {1997}
}

@article{amani2026learning,
  title={Learning optimal crew dispatch for grid restoration following an earthquake},
  author={Amani, Farshad and Ardali, Faezeh and Kargarian, Amin},
  journal={IEEE Transactions on Smart Grid},
  year={2026},
  publisher={IEEE}
}

@article{ullah2022quantum,
  title={Quantum computing for smart grid applications},
  author={Ullah, Md Habib and Eskandarpour, Rozhin and Zheng, Honghao and Khodaei, Amin},
  journal={IET Generation, Transmission \& Distribution},
  volume={16},
  number={21},
  pages={4239--4257},
  year={2022},
  publisher={Wiley Online Library}
}

@article{chen2025review,
  title={A review of quantum computing technologies in power system optimization},
  author={Chen, Yousu and Vu, Thanh Long},
  year={2025},
  publisher={Pacific Northwest National Laboratory (PNNL), Richland, WA (United States)}
}
\end{document}